\newtheorem{theorem}{Theorem}[section]
\theoremstyle{definition}
\newtheorem{definition}{Definition}[section]
\newtheorem{assumption}{Assumption}
\newtheorem{remark}{Remark}
\begin{document}


\title{Simultaneous Identification and Optimal Tracking Control of Unknown Continuous Time Nonlinear System With Actuator Constraints Using Critic-Only Integral Reinforcement Learning}

\author{
\name{Amardeep Mishra\textsuperscript{a}\thanks{Amardeep Mishra Email: ae15d405@smail.iitm.ac.in} and Satadal Ghosh\textsuperscript{b}\thanks{Satadal Ghosh Email: satadal@iitm.ac.in}}
\affil{\textsuperscript{a}Ph.D. Student, Department of Aerospace Engineering, IIT Madras, Chennai, India-600036; \textsuperscript{b}Faculty, Department of Aerospace Engineering, IIT Madras, Chennai, India-600036}
}

\maketitle
\begin{abstract}
In order to obviate the requirement of drift dynamics in adaptive dynamic programming (ADP), integral reinforcement learning (IRL) has been proposed as an alternate formulation of Bellman equation.
However control coupling dynamics is still needed to obtain closed form expression of optimal control effort.
In addition to this, initial stabilizing controller and two sets of neural networks (NN) (known as Actor-Critic) are required to implement IRL scheme.
In this paper, a stabilizing term in the critic update law is leveraged to avoid the requirement of an initial stabilizing controller in IRL framework to solve optimal tracking problem with actuator constraints.
With such a term, only one NN is needed to generate optimal control policies in IRL framework.
This critic network is coupled with an experience replay (ER) enhanced identifier to obviate the necessity of control coupling dynamics in IRL algorithm.
The weights of both identifier and critic NNs are simultaneously updated and it is shown that the ER-enhanced identifier is able to handle parametric variations better than without ER enhancement.
The most salient feature of the novel update law is its variable learning rate, which scales the pace of learning based on instantaneous Hamilton-Jacobi-Bellman (HJB) error. 
Variable learning rate in critic NN coupled with ER technique in identifier NN help in achieving tighter residual set for state error and error in NN weights as shown in uniform ultimate boundedness (UUB) stability proof.
The simulation results validate the presented "identifier-critic" NN on a nonlinear system.
\end{abstract}

\begin{keywords}
Integral Reinforcement Learning, Variable Gain Gradient Descent, Experience Replay, Optimal Tracking, Actuator Constraints
\end{keywords}
\section{INTRODUCTION}
Optimal control aims to find control policies that minimizes an objective function subjected to plant dynamics.
It can be obtained using either Pontryagin's minimum principle or solving HJB equation. Traditionally, these schemes are off-line and require complete knowledge of system dynamics to find the control structure before implementation.
Solution of HJB equation, which is a nonlinear partial differential equation (PDE), provides the optimum value function that can be utilized to generate the optimum control policies. However, for generic nonlinear system solving HJB equation is often intractable. 
In order to by-pass the challenge of solving HJB equation directly, considerable effort has been dedicated in literature to finding algorithms that provide approximate solutions to HJB equations, such as, iterative Approximate Dynamic Programming methods. 
First few results utilizing adaptive dynamic programming (ADP) in optimal regulation problem for generic continuous time nonlinear systems (CTNS) were presented in 
\cite{abu2005nearly}, 
\cite{vamvoudakis2010online}.  
A recursive least square formulation was presented to tune the NN weights in \cite{abu2005nearly}, whereas,  
\cite{vamvoudakis2010online} proposed synchronous tuning of actor-critic neural networks (NN) in their paper. 
Similarly, \cite{dierks2010optimal} and \cite{zhang2011data} were first few papers on optimal tracking problem utilizing ADP algorithm.
However, in both these papers, they needed two different controllers, i.e., a transient and a steady state controller to implement optimal tracking.
Further, the steady state controller required inverting control gain matrix.

In order to address this issue, \cite{modares2014integral}, \cite{kiumarsi2014reinforcement} and \cite{modares2014optimal} proposed the concept of augmented system states comprising of error and desired system states.
In their formulation, the optimal tracking controller was the one that minimised the cost function subjected to the nominal augmented dynamics.
By using augmented states, they could generate optimal tracking controller without requiring the invertibility of control gain matrix.

Most of the schemes discussed above except \cite{dierks2010optimal} require an initial stabilizing control to initiate the process of policy iteration and two different NNs to implement reinforcement learning (RL). Finding an initial stabilizing controller is often not a trivial task. 
The criteria of initial stabilizing control for ADP algorithm was relaxed in \cite{dierks2010optimal} wherein they proposed a modified update law that contained a stabilizing term, which used to come into effect when Lyapunov function was non decreasing along the system trajectories.
Similarly, \cite{yang2015robust} leveraged the stabilizing term proposed in \cite{dierks2010optimal} to develop an ADP algorithm for robust optimal tracking control of nonlinear systems that did not require an initial stabilizing controller.
Additionally, they had robust terms to counter any variations in drift dynamics.
In both \cite{dierks2010optimal} and \cite{yang2015robust}, tracking control action could be generated by only a single NN i.e., critic NN. 
However, their method required the knowledge of nominal plant dynamics and did not include any actuator constraint. 

When information of nominal plant dynamics is not known, then identifiers have been utilized in literature to approximate the dynamics before using it in ADP algorithm.
They can be broadly classified into two categories, i.e., (a) system identification when the structure of the dynamics is unknown, and (b) system identification when structure of the dynamics is known, but the parameters of the plant are unknown.
First few papers utilizing identifiers with ADP algorithms are, \cite{bhasin2013novel}, \cite{yang2014reinforcement} and \cite{lv2016online} for regulation and  \cite{zhang2011data}, \cite{na2014approximate} and \cite{hou2017adaptive} for tracking.
In \cite{zhang2011data}, the knowledge of control coupling dynamics was assumed to be unknown, and the identifier was run prior to ADP algorithm. On the other hand, in \cite{bhasin2013novel} the identifier was run simultaneously with the ADP algorithm under the assumption of prior knowledge of control coupling dynamics. Using NN-based identifiers, \cite{yang2014reinforcement} implemented optimal regulation, where even the structure of the dynamical system was unknown. 
An ER technique-based identifier was presented in \cite{modares2013adaptive} to minimize the difference between actual state and identifier state.
In all the aforementioned papers, identifier state $\hat{x}$ converges to actual state $x$, however, the convergence of estimated weights $\hat{W}_I$ to true weights $W_I$ is not guaranteed. 
In order to remedy that, a novel online identification method that ensured convergence to true NN weights based on the weights error, instead of state error was proposed in \cite{lv2016online}. 

Their update law could directly minimize the error between ideal and estimated NN weights, whereas in other schemes presented above, the parameters are tuned such that the difference between identifier output and actual plant output is minimized.
In most of the above schemes, identifiers were utilized to approximate both drift and/or control coupling dynamics, and then the estimated drift and/or control coupling dynamics were then used in the adaptation law for critic and/or actor NN(s) for generating optimal control policy. 
It is important to note that because of the inherent approximation errors in identified drift and control coupling dynamics, the approximation error enters the ADP algorithm via two channels i.e., both drift dynamics and control coupling dynamics, when both of them are unknown \textit{a-priori}.



In ADP schemes mentioned above, the information of nominal plant dynamics was needed either explicitly or from an identifier.
In order to reduce the dependence of ADP algorithms on plant dynamics, integral reinforcement learning (IRL) schemes have been proposed in literature in recent times as an alternate form of Bellman equation to obviate the requirement of drift dynamics. 
Hence, in IRL scheme, drift dynamics does not appear in either adaptation law or the policy improvement stage.
In certain formulations like off-policy IRL methods, such as those presented in \cite{modares2015h}, \cite{zhu2016using}, \cite{zhang2017finite},  \cite{liu2019analysis} and \cite{mishra2020texthinfty}, even the control coupling dynamics is not required to compute optimal policies and are hence model free. 
However, the exploration phase inherent to these off-policy algorithms make it unsuitable for various engineering applications with fast dynamics such as control of aerial vehicles.
The final control policies learnt using aforementioned off-policy IRL methods cannot effectively cope up with either an unpredictable change in trajectory arising out of several reasons like obstacle avoidance requirements or a sudden change in dynamics, such as change in mass or inertia. 
These are some of the prime motivations to consider online and on-policy IRL algorithm in this paper.
To that end, \cite{modares2014integral} and \cite{vamvoudakis2014online} are some of the earliest papers on synchronous tuning of actor-critic NN, based on a gradient descent-driven update law in IRL framework for optimal tracking control problem (OTCP) and regulation problem, respectively. 
\cite{mishra2019criticonly} presented an IRL tracking controller that did not require an initial stabilizing controller and could be implemented with only critic NN. 
However, note that in most of the On-policy IRL formulations for OTCP of CTNS present in the literature, prior knowledge of control coupling dynamics is still needed. 
To the best of authors' knowledge, there is no study on identifier-augmented IRL formulations.

To this end, inspired by IRL formulation in \cite{modares2014optimal}, use of stabilizing term in critic NN update law in \cite{dierks2010optimal} and NN-based parameter identifier in \cite{lv2016online}, this paper expands over \cite{mishra2019criticonly} and presents an identifier-aided IRL algorithm for OTCP of CTNS with unknown model parameters requiring no initial stabilizing controller. Salient features of the control algorithm presented in this paper are as follow.

\begin{enumerate}
    \item In the presented formulation, the control coupling dynamics identifier adds to that in \cite{lv2016online} by leveraging experience replay (ER) technique, thus improving the convergence properties of the identifier NN weights by utilizing the past observations effectively, while a variable gain gradient descent-based critic update law containing stabilizing term is used in the IRL framework similar to that in \cite{mishra2019criticonly}. 
    \item Stability analysis of the synchronous tuning of the ER-augmented identifier and the critic NNs show that the presented scheme yields tighter residual sets for state error and error in NN weights.
    \item Unlike identifier-aided RL-based ADP formulations like \cite{lv2016online} that uses both estimated drift and control coupling dynamics in ADP algorithm, this paper uses only the estimated control coupling dynamics, thus resulting in lesser approximation errors coming from identifier. Thus, the formulation presented in this paper leverages the advantages of both the identifiers and the IRL formulations in a justified way.
    \item Unlike traditional IRL formulations like \cite{modares2014optimal} and \cite{vamvoudakis2014online}, this paper leverages stabilizing term in critic update law in the IRL formulation similar to the one used for RL-based ADP algorithm in \cite{liu2015reinforcement}. This ensures that no initial stabilizing controller is required even in the absence of any prior knowledge of drift dynamics.
    \item Moreover, the use of variable gain gradient descent (similar to \cite{mishra2019variable}) in the critic update law in the presented IRL formulation helps in achieving tighter residual set compared to constant learning rate gradient descent while attaining high learning rate.
\end{enumerate}
The rest of the paper is organized as follows. 
Section \ref{identif} introduces the identifier NN. 
It is divided into two subsections, namely, preliminaries and the ER-enhanced parameter update law and its stability analysis.
Section \ref{prelim} discusses the problem of OTCP for CTNS with actuator constraints and approximation of value function using a single NN, Section \ref{vargain} presents the notion of IRL algorithm for OTCP.
Section \ref{paramupd} provides the parameter update law for critic NN in IRL framework and detailed discussion of the stability when identifier and critic NN are tuned simultaneously. 
Section \ref{res} provides the numerical simulation results to show the effectiveness of the proposed scheme on nonlinear system and finally concluding remarks are presented in Section \ref{conclusion}.

\section{Enhanced system identification using ER}\label{identif}
\subsection{Preliminaries of system identification}
In order to address the issue of non-availability of control coupling dynamics for online and on-policy IRL algorithm, an improved version of online adaptive identifier \cite{lv2016online} is developed in this section utilizing the notion of experience replay (ER). Dynamics of a control-affine system is given as,
\begin{equation}
\dot{x}=f(x)+g(x)u
\label{eq:affine1}
\end{equation}
where, $x \in \mathbb{R}^n$, $u \in \mathbb{R}^m$, $f(x):\mathbb{R}^n \to \mathbb{R}^n$ and $g(x):\mathbb{R}^n \to \mathbb{R}^{n\times m}$. 
Drift and control dynamics are assumed to be Lipschitz continuous in $x$ over compact set $\Omega \subset \mathbb{R}^n$ and hence, can be approximated by NNs  \cite{ren2009neural}. 
Assuming that there exist ideal NN weights that can accurately approximate both $f(x)$ and $g(x)$ as:
\begin{equation}
\begin{split}
f(x)=w_1\xi_1(x)+\epsilon_f;~~g(x)=w_2\xi_2(x) +\epsilon_g
\end{split}
\label{approx}
\end{equation}
where, $w_1 \in \mathbb{R}^{n\times k_{w_1}}$ and $w_2 \in \mathbb{R}^{n\times k_{w_2}}$ are the unknown optimal weights that can accurately approximate the unknown dynamics. 
Regressors for drift and control coupling dynamics are denoted as $\xi_1 \in \mathbb{R}^{k_{w_1}}$ and $\xi_2 \in \mathbb{R}^{k_{w_2}\times m}$, respectively, and $\epsilon_f \in \mathbb{R}^n$ and $\epsilon_g \in \mathbb{R}^{n\times m}$ are the corresponding approximation errors, respectively. According to Weirstrauss higher-order approximation theory \cite{abu2005nearly} and \cite{finlayson2013method}, as the size of regressors i.e., $\xi_1$ and $\xi_2$ increase, i.e $k_{w_1} \to \infty, k_{w_2} \to \infty$, the approximation error goes to zero. Now, using (\ref{approx}) in (\ref{eq:affine1}),
\begin{equation}
 \dot{x}=W_1^T\Phi(x,u)+\epsilon_{T} 
 \label{identifier}
\end{equation}
where, $W_1=[w^T_1;w^T_2] \in \mathbb{R}^{(k_{w_1}+k_{w_2})\times n}$ 
is the combined weight matrix for drift and control dynamics and $\Phi(x,u)=[\xi_1^T(x),u^T\xi_2^T(x)]^T \in \mathbb{R}^{k_{w_1}+k_{w_2}}$ is the combined regressor. 
And, $\epsilon_{T}=\epsilon_f+\epsilon_gu$ represents the combined approximation error.

Now, since ideal identifier weights are unknown, their updated values will be used instead to generate the drift and identified dynamics, i.e.,
\begin{equation}
\hat{f}(x)=\hat{w}_1\xi_1;~\hat{g}(x)=\hat{w}_2\xi_2
\label{eq:fgha}
\end{equation}
Therefore, the identified dynamics can be represented as, 
\begin{equation}
\dot{\hat{x}}=\hat{W}^T_1\Phi(x,u)
\label{eq:final_xhat}
\end{equation}
where, $\hat{x}$ is the state of the identifier.
\subsection{ER-based parameter update law for identifier}

Various online parameter update schemes for identifiers have been presented in literature like minimization of the residual identifier output error 
in \cite{zhang2011data}, modified robust
integral of sign of the error (RISE) algorithm in \cite{bhasin2013novel}, experience replay (ER)-based method in \cite{modares2013adaptive}, etc. However, in all the aforementioned papers, it was desired to achieve the identifier state $\hat{x}$ to converge to actual state $x$, while convergence of estimated weights $\hat{W}_1$ to true weights $W_1$ was not guaranteed. As a remedy to this problem, a promising online identification method that ensured convergence to true NN weights based on the weights error, instead of state error, was proposed in \cite{lv2016online}. An ER-based augmented version of the identifier update law in \cite{lv2016online} is now presented in this section.
Note that the update law presented here has additional advantages compared to the one developed in \cite{lv2016online}, for instance, ER leads to efficient utilization of the past observations in learning NN weights. It should be noted here that the ER-based identification scheme presented in \cite{modares2013adaptive} is different from the one in this paper in the sense that in \cite{modares2013adaptive} error between actual state of the system and identifier state was minimized, whereas, in this paper the error between estimated and ideal identifier NN weights is minimized.

In order to develop the update law, low-pass-filtered versions of regressor vector $(\Phi_f)$  and state vector $(x_f)$ are defined as follows, where the $f$ subscript denotes the filtered variables. 

\begin{equation}
\begin{split}
k\dot{\Phi}_f+\Phi_f=\Phi;~~k\dot{x}_f+x_f=x
\end{split}
\label{filt_reg}
\end{equation}
where, $k>0$ is a scalar.
Next, two matrices $\Pi$ and $K$ are defined as,
\begin{equation}
\begin{split}
\dot{\Pi}+l\Pi=\Phi_f\Phi_f^T;~~\dot{K}+lK=\Phi_f\dot{x}_f^T
\end{split}
\label{filt_mat}
\end{equation}
where, $l>0$ is a scalar and $\Pi : \mathbb{R}^n \to \mathbb{R}^{(k_{w_1}+k_{w_2})\times (k_{w_1}+k_{w_2})}$ and $K: \mathbb{R}^n \to \mathbb{R}^{(k_{w_1}+k_{w_2})\times n}$. 
The solution to (\ref{filt_mat}) with initial conditions $\Pi(0)=0$ and $K(0)=0$ can be obtained as following,
\begin{equation}
\small
\begin{split}
\Pi(t)=\int_0^te^{-l(t-s)}\Phi_f(s)\Phi_f^T(s)ds;~K(t)=\int_0^te^{-l(t-s)}\Phi_f(s)\dot{x}_f(s)^Tds
\end{split}
\label{eq:PK}
\end{equation}
Let $M_1(t)\triangleq\Pi(t)\hat{W}_1-K(t)$. In the subsequent analysis, $M_1 \triangleq M_1(t)$, $M_{1j} \triangleq M_1(t_j)$, $\Pi_j \triangleq \Pi(t_j)$ and $K_j \triangleq K(t_j)$. Then, the update law for identifier NN is given as,
\begin{equation}
\small
\dot{\hat{W}}_1=-\Gamma_1(M_1+\sum_{j=1}^N M_{1j})=-\Gamma_1(\Pi\hat{W}_1-K+\sum_{j=1}^N \Pi_{j}\hat{W}_1-\sum_{j=1}^NK_j)
\label{update}
\end{equation}
Here, $\Gamma_1 \in \mathbb{R}^{(k_{w_1}+k_{w_2})\times(k_{w_1}+k_{w_2})}$ is a positive definite constant learning rate matrix that determines how fast or slow the weights will converge to their true values. 
In (\ref{update}), the term $M_1$ contains the information of error in NN weights, i.e., $\tilde{W}_1=W_1-\hat{W}_1$ as will become clear in the proof of the Theorem \ref{thm:identifier}. Further, the terms under summation represents past values of the term $M_1$ over the memory stack of size $N \geq k_{w_1}+k_{w_2}$. 
This is done to make $\Pi$ a PD matrix, as will become clearer in the proof.


\begin{assumption}\label{as_phi}
\textnormal{The drift dynamics is Lipschitz continuous in $x$ over a compact set $\Omega \in \mathbb{R}^n$ i.e., $\|f(x)\| \leq L_m\|x\|$ and control coupling dynamics is bounded such that $\|g(x)\| \leq g_M$, where $L_m >0$ and $g_M>0$. 
Following from (\ref{approx}) and (\ref{identifier}), as NNs are used to approximate this Lipschitz continuous vector field, the regressor vector, $\Phi$ is bounded such that, there exists, $p_1 \geq 0$, such that, $\|\Phi\| \leq p_1$.
This is a valid assumption because $x \in \Omega \subset \mathbb{R}^n$ and control $u$ is restricted over the set $[-u_m,u_m]$. Further, the combined approximation error $\varepsilon_T$ is bounded such that, there exists, $p_2 \geq 0$, such that, $\|\varepsilon_T\| \leq p_2$. }
\end{assumption}
This is also in line with assumption made in Section 3.1 of \cite{lv2016online}.
Note that due to Assumption \ref{as_phi}, the low-pass filtered versions of $\Phi$ and $\varepsilon_{T}$ are bounded as well.
\begin{definition}\label{PE}
The function $\Phi$ is considered to be persistently excited (PE), if and only if there exist positive scalar constants, $\alpha_1,\alpha_2,T_1$ for all $t>0$ 
such that
\begin{equation}
\alpha_1 I \leq \int_{t}^{t+T_1}\Phi(x(\tau),u(\tau))\Phi^T(x(\tau),u(\tau))d\tau \leq \alpha_2 I
\label{pe1}
\end{equation}
\end{definition}

{Considering that $\Phi \triangleq \Phi(x(\tau),u(\tau))$ is persistently excited, Definition \ref{PE} implies that the matrix $\Pi(t)$ defined in (\ref{eq:PK}) is positive semi-definite (PSD) for time $t<T_1$ and positive definite (PD) for all time $t\geq T_1$.}

\begin{theorem}\label{thm:identifier}
If the regressor vector $\Phi$ is persistently excited (PE) then the update law defined in (\ref{update}) ensures asymptotic stability for $\tilde{W}_1$ i.e., ($\tilde{W}_1=W_1-\hat{W}_1$) when NN identifier approximation error is 0 and UUB stability when NN identifier approximation error is not 0 for the dynamics defined in (\ref{eq:affine1}) $\forall t \geq T_1$ (where, $T_1$ is as defined in definition \ref{PE}).
\end{theorem}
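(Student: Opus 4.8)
The plan is to reduce the update law (\ref{update}) to a linear time-varying error dynamics in $\tilde{W}_1$ driven by a bounded disturbance, and then analyse it with a quadratic Lyapunov function, invoking the PE consequence stated just before the theorem to activate negative definiteness for $t \geq T_1$.

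First I would filter the plant dynamics (\ref{identifier}) through the cascaded low-pass filters in (\ref{filt_reg}). Because $W_1$ is constant and the filters are linear with unit DC gain, this yields $\dot{x}_f = W_1^T \Phi_f + \epsilon_{Tf}$, where $\epsilon_{Tf}$ is the filtered combined approximation error. Substituting this expression for $\dot{x}_f$ into the definition of $K(t)$ in (\ref{eq:PK}) gives $K = \Pi W_1 + \Xi$, where $\Xi(t) \triangleq \int_0^t e^{-l(t-s)}\Phi_f(s)\epsilon_{Tf}^T(s)\,ds$ collects the error contribution. This identity is the crux of the proof, since it converts the measurable regressor term $M_1$ into an explicit function of the unmeasured weight error: $M_1 = \Pi\hat{W}_1 - K = -\Pi\tilde{W}_1 - \Xi$, and identically $M_{1j} = -\Pi_j\tilde{W}_1 - \Xi_j$ for each stored sample in the memory stack.

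Next, substituting these into (\ref{update}) and using $\dot{\tilde{W}}_1 = -\dot{\hat{W}}_1$ produces the closed-loop error dynamics $\dot{\tilde{W}}_1 = -\Gamma_1 \Sigma \tilde{W}_1 - \Gamma_1 \Theta$, where $\Sigma \triangleq \Pi + \sum_{j=1}^N \Pi_j$ and $\Theta \triangleq \Xi + \sum_{j=1}^N \Xi_j$. I would then invoke the consequence of Definition \ref{PE}, namely that PE renders $\Pi(t)$, and hence $\Sigma$, positive definite for all $t \geq T_1$; note that the experience-replay sum $\sum_{j}\Pi_j$ only enlarges $\lambda_{\min}(\Sigma)$, which is precisely the mechanism behind the claimed tighter residual set. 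Choosing $V = \tfrac{1}{2}\,\mathrm{tr}(\tilde{W}_1^T \Gamma_1^{-1} \tilde{W}_1)$ and differentiating along the error dynamics, the factor $\Gamma_1$ cancels and one obtains $\dot{V} = -\mathrm{tr}(\tilde{W}_1^T \Sigma \tilde{W}_1) - \mathrm{tr}(\tilde{W}_1^T \Theta)$, with the first term bounded above by $-\lambda_{\min}(\Sigma)\|\tilde{W}_1\|_F^2$ for $t \geq T_1$. When the approximation error vanishes, $\Theta = 0$, so $\dot{V} < 0$ for $\tilde{W}_1 \neq 0$, giving asymptotic stability. Otherwise, using Assumption \ref{as_phi} (through which the filtered $\Phi_f$ and $\epsilon_{Tf}$ are bounded), each $\Xi$ term is bounded by a constant of order $p_1 p_2 / l$, so $\|\Theta\|_F$ is bounded; applying Cauchy–Schwarz and completing the square on $-\mathrm{tr}(\tilde{W}_1^T\Theta)$ shows $\dot{V} < 0$ whenever $\|\tilde{W}_1\|_F$ exceeds a threshold proportional to $\|\Theta\|_F/\lambda_{\min}(\Sigma)$, which establishes UUB and exhibits the residual set shrinking as $\lambda_{\min}(\Sigma)$ grows.

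I expect the main obstacle to be the first step: correctly propagating the plant dynamics through the two cascaded filters to arrive at $K = \Pi W_1 + \Xi$ without introducing spurious derivative terms, since this is exactly what makes $M_1$ an explicit function of $\tilde{W}_1$. Once that identity is in hand, the remainder is a standard PE-based Lyapunov argument, with the experience-replay terms entering only to sharpen the residual bound.
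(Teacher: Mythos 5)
Your proposal is correct and follows essentially the same route as the paper's own proof: filter the plant dynamics to get $\dot{x}_f = W_1^T\Phi_f + \epsilon_{Tf}$, use this to write $K = \Pi W_1 + \Xi$ (the paper's $K = \Pi W_1 - y_1$ with $y_1 = -\Xi$), so that $M_1$ and each stored $M_{1j}$ become explicit in $\tilde{W}_1$, then apply the Lyapunov function $\tfrac{1}{2}\mathrm{tr}(\tilde{W}_1^T\Gamma_1^{-1}\tilde{W}_1)$ with the PE-induced positive definiteness of $\Pi + \sum_j \Pi_j$ for $t \geq T_1$, splitting into the error-free (asymptotic) and bounded-error (UUB) cases exactly as the paper does. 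The only differences are notational (your $\Sigma,\Theta,\Xi$ versus the paper's $P$, $y_1 + \sum_j y_{1j}$, $-y_1$; note your $\Sigma$ collides with a symbol the paper reuses later, but not within this section).
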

\begin{proof}
From (\ref{identifier}) and (\ref{filt_reg}), Neural Network representation of $\dot{x}_f$ can be obtained as $\dot{x}_f=W_1^T\Phi_f+\varepsilon_{Tf}$, where, $\varepsilon_{Tf}$ is the filtered version of $\varepsilon_T$ defined as, $k\dot{\varepsilon}_{Tf}+\varepsilon_{Tf}=\varepsilon_T$.
Now, utilizing NN representation of $\dot{x}_f$ in $K$ mentioned in (\ref{eq:PK}), it can be written as, 
\begin{equation}
  K=\Pi W_1-y_1  
\label{eq:k}
\end{equation}
where, $y_1$ is given by,
\begin{equation}
    y_1=-\int_0^te^{-l(t-\tau)}\Phi_f(x(\tau),u(\tau))\varepsilon^T_{Tf}(\tau)d\tau
\label{eq:y1}
\end{equation}. 
Using (\ref{eq:k}) in the definition of $M_1$ above,
\begin{equation}
M_1=\Pi\hat{W}_1-K=\Pi\hat{W}_1-\Pi W_1+y_1=-\Pi\tilde{W}_1+y_1
\label{M1}
\end{equation}
Thus, in (\ref{update}), the term $M_1$ contains the information of error in NN weights.
Now, from (\ref{update}) and (\ref{M1}) and using the fact that, $\dot{\tilde{W}}_1=-\dot{\hat{W}}_1$, the dynamics of error in NN weights is expressed as, 
\begin{equation}
\dot{\tilde{W}}=\Gamma_1(-\Pi\tilde{W}_1+y_1-\sum_{j=1}^N\Pi_j\tilde{W}_1+\sum_{j=1}^Ny_{1j})
\end{equation}

Let the Lyapunov candidate be, $V_1=(1/2)tr(\tilde{W}_1^T\Gamma_1^{-1}\tilde{W}_1)$, where $\tilde{W}_1=W_1-\hat{W}_1$.


The time derivative of $V_1$ is given as, 

\begin{equation}
\begin{split}
\dot{V}_1=tr(\tilde{W}_1^T\Gamma_1^{-1}\dot{\tilde{W}}_1)=-tr(\tilde{W}_1^T(\Pi+\sum_{j=1}^N\Pi_j)\tilde{W}_1)+tr(\tilde{W}^T_1(y_1+\sum_{j=1}^Ny_{1j}))
\end{split}
\label{eq:firstvdot}
\end{equation}
In this scenario, i.e., $t \geq T_1$, $\Pi$ matrix will be a PD matrix.

\textbf{Case(i)}: When there is no approximation error, i.e., $y_1=0$

\begin{equation}
\dot{V}_1=-tr(\tilde{W}_1^T(\Pi+\sum_{j=1}^N\Pi_j)\tilde{W}_1) \leq -\lambda_{min}(P)\|\tilde{W}_1\|^2
\label{eq:asymp_stab}
\end{equation}
where, $P\triangleq \Pi+\sum_{j=1}^N\Pi_j$ and $\lambda_{min}$ corresponds to the minimum eigenvalue. 
{Since $P$ is a PD matrix for time $t\geq T_1$ due to persistently excited $\Phi$, $\lambda_{min}(P)>0$ for $t\geq T_1$. 
It is because of this reason that, \eqref{eq:asymp_stab} ensures asymptotic stability of error in identifier NN weights.}

\textbf{Case(ii)}: When there is approximation error, i.e., $y_1 \neq 0$.
\begin{equation}
\begin{split}
\dot{V}_1=tr(\tilde{W}_1^T\Gamma_1^{-1}\dot{\tilde{W}}_1) \leq -\lambda_{min}(P)\|\tilde{W}_1\|^2 + \nu_1\|\tilde{W}_1\|
\end{split}
\label{eq:w1dot}
\end{equation}
where, $\|y_1+\sum_{j=1}^Ny_{1j}\| \leq \nu_1$. 
This follows from using Assumption \ref{as_phi} in (\ref{eq:y1}).

From (\ref{eq:w1dot}) it can be concluded that, $\dot{V}_1$ is negative definite provided,
\begin{equation}
\begin{split}
\|\tilde{W}_1\| > \frac{\nu_1}{\lambda_{min}(P)}
\end{split}
\label{eq:UUB_stab}
\end{equation}
Eq. (\ref{eq:UUB_stab}) implies stability of $\tilde{W}_1$ in the sense of UUB.
\end{proof}


\begin{remark}\label{re1}
There are significant differences between the identifier-critic structure presented in this paper and those in \cite{modares2013adaptive} and \cite{lv2016online}. 
The concept of ER was leveraged in identifier presented in  \cite{modares2013adaptive}, but in this paper error between actual system states and identifier states was minimized instead of directly minimizing error in NN weights, which is more desirable to achieve. 
While \cite{lv2016online} tried to minimize the error in NN weights directly, past observations pertaining to error in NN weights were not effectively utilized by them. 
{Unlike both of these papers, the ER-based identifier NN parameter update law presented in this paper utilizes past observation of errors in NN weights more effectively and attempts to minimize the error in identifier NN weights directly.
Based on the arrival of new data i.e., new $\Pi(t)$ matrix, the second term in matrix $P$ i.e., $\sum_{j=1}^N\Pi_j$ gets updated 
, which in turn implies increasing $\lambda_{min}(P)$ for $t>T_1$ due to the PE consideration of $\Phi$.
This, in effect, yields a tighter UUB bound for $\tilde{W}_1$ in (\ref{eq:UUB_stab}), when there are approximation errors in NN identifier.}
\end{remark}

\section{Optimal tracking control problem and value function approximation}\label{prelim}
\subsection{Preliminaries}
Since the dynamics are assumed to be unknown, their estimated version, particularly the estimated control coupling dynamics will be used in the synthesis of IRL-tracking controller.
The identified dynamics in terms of updated identifier weights are as mentioned in (\ref{eq:fgha}) and (\ref{eq:final_xhat}).

\begin{assumption}\label{uncert}
\textnormal{Following Assumption \ref{as_phi}, the approximated drift dynamics is Lipschitz continuous in $x$ on the compact set $\Omega \subseteq \mathbb{R}^n$ i.e., $\exists L_{1f}>0,\ni \|\hat{f}(x)\| \leq L_{1f}\|x\|,~ \forall x \in \Omega$.  
Similarly, there exists a positive bound ($ g_{M}>0 $) over approximated control coupling dynamics, such that, $~0<\|\hat{g}(x)\|<g_{M}$.
}
\end{assumption}
\begin{assumption}\label{ref_dyn}
\textnormal{Let $x_d(t)$ be the desired reference trajectory which is bounded and governed by $\dot{x}_d(t)=H(x_d(t)) \in \mathbb{R}^n$ and $H(0)=0$, where $H(x_d(t))$ is Lipschitz continuous in $x_d$.}
\end{assumption}
Assumptions \ref{uncert} and \ref{ref_dyn} are also made in \cite{modares2014optimal}, \cite{yang2015robust}.
Dynamics of tracking error $(e\triangleq x-x_d)$ can be written as:
\begin{equation}
\begin{split}
\dot{e}=\dot{x}-\dot{x}_d=\hat{f}(x_d+e)+\hat{g}(x_d+e)u(t)-H(x_d(t))
\end{split}
\label{e_dyn}
\end{equation}
Therefore, the dynamics of augmented system, given as $z=[e^T,x_d^T]^T$, can compactly be written as:
\begin{equation}
\dot{z}=\hat{F}(z)+\hat{G}(z)u
\label{aug}
\end{equation}
where, $u \in \mathbb{R}^m$, $F:\mathbb{R}^{2n} \to \mathbb{R}^{2n}$ and $G: \mathbb{R}^{2n} \to \mathbb{R}^{2n\times m}$ are given by:
\begin{equation}
\begin{split}
\hat{F}(z)=\begin{pmatrix}
\hat{f}(e+x_d)-H(x_d) \\
H(x_d)
\end{pmatrix},~\hat{G}(z)=
\begin{pmatrix}
\hat{g}(e+x_d) \\
0
\end{pmatrix}
\end{split}
\label{eq:aug_system}
\end{equation}

\begin{assumption}\label{as:e}
Following from Assumptions \ref{uncert} and \ref{ref_dyn}, the approximated augmented drift dynamics is Lipschitz continuous in augmented state $z$ such that (s.t), $\|\hat{F}(z)\| \leq L_F\|z\|$ and control coupling dynamics is bounded s.t $\|\hat{G}(z)\| \leq g_M$, where $L_F \geq 0$.
Due to the definition of $z=(e,x_d)^T$, the augmented drift dynamics is continous in $e$ over compact set $\Omega_e \subset \mathbb{R}^n$, such that, $\|\hat{F}(z)\| \leq L_{f1}\|e\|+L_{f2}$, for two positive constants $L_{f1}$ and $L_{f2}$.
\end{assumption}
 Assumption \ref{as:e} is in line with Assumptions made in \cite{modares2014optimal} (refer to Eq. (80) in \cite{modares2014optimal}). 

One of the prime advantages of considering an augmented system, is that, the controller does not require invertibility of control gain matrix, and a single controller comprising of both steady state controller and transient control can be synthesized as indicated in \cite{modares2014optimal} and  \cite{kiumarsi2014reinforcement}. The infinite horizon discounted cost function for (\ref{aug}) is considered as follows \cite{modares2014optimal} : 

\begin{equation}
V(z(t))=\int_t^{\infty} e^{-\gamma(\tau-t)}[Q(z(\tau))+U(u(\tau))]d\tau    
\label{cost}
\end{equation}
where, $Q(z)=z^TQ_1z$ and $Q_1 \in \mathbb{R}^{2n\times 2n}$ is a positive definite matrix given by:
\begin{equation}
\begin{split}
Q_1=\begin{pmatrix}
Q_{n\times n} & 0_{n\times n} \\
0_{n\times n} & 0_{n\times n}
\end{pmatrix}_{2n\times 2n}
\end{split}
\end{equation}
This choice of $Q_1$ leads to $z^TQ_1z=e^TQe$. 
In (\ref{cost}), the term, $[Q(z(\tau))+U(u(\tau))]$ defines the value of utility or penalty per step.
In order to deal with actuator-constraints, the choice of positive definite $U(u)$ was made in line with \cite{abu2005nearly}, \cite{lyashevskiy1996constrained}, \cite{abu2008neurodynamic} and \cite{modares2013adaptive}, as given below,
\begin{equation}
\begin{split}
U(u)=2u_m\int_0^u (\psi^{-1}(\nu/u_m))^TRd\nu = 2u_m\sum_{i=1}^{m}\int_0^{u_i} (\psi^{-1}(\nu_i/u_m))^TR_i d\nu_i
\end{split}
\end{equation}
where, $R \in \mathbb{R}^{m\times m}$ is a positive definite diagonal matrix, ($\psi: \mathbb{R}^m \to \mathbb{R}^m$) is a function possessing following properties -
\\ (i) It is odd and monotonically increasing 
\\ (ii) It is smooth bounded function such that $|\psi(.)| \leq 1$ 

In literature, $\psi$ has been considered as tanh, erf, sigmoid functions. In this paper, $\psi=\tanh{(.)}$ is followed. This also ensures $U(u)$ to remain positive definite (refer to Lemma \ref{cu}). 
The discount factor, $\gamma \geq 0$, determines the value of utility in future. 

Differentiating (\ref{cost}) along the system trajectories and rearranging the terms,
\begin{equation}
\begin{split}
 \nabla{V}(\hat{F}(z)+\hat{G}(z)u)-\gamma V(z)+z^TQ_1z+U(u)=\mathcal{H}(z,u,\nabla{V})=0   
\end{split}
\label{Ha}
\end{equation}
where, $\mathcal{H}(.)$ represents the Hamiltonian.
Let $V^*(z)$ be the optimal cost function satisfying $\mathcal{H}(.)=0$ and is given by,
\begin{equation}
V^*(z)=\min_{u} \int_t^{\infty}e^{-\gamma(\tau-t)}[z^TQ_1z+U(u)]d\tau
\label{opt_cost}
\end{equation}
Let $V^*\triangleq V^*(z)$. Then, $\mathcal{H}(z,u, \nabla{V}^*)=0$ can also be re-written as, 
\begin{equation}
 \nabla{V}^*(\hat{F}(z)+\hat{G}(z)u)-\gamma V^*+z^TQ_1z+U(u)=0
\label{hjb1} 
 \end{equation}


Differentiating (\ref{hjb1}) with respect to $u$, i.e, $\partial \mathcal{H}/ \partial u=0$, closed-form of optimal control action $u^*$ is obtained as \cite{modares2014optimal} and \cite{liu2015reinforcement}.
\begin{equation}
u^*=-u_m\tanh{\Big(R^{-1}\hat{G}(z)^T{\nabla{V}^*}/2u_m\Big)}
\label{opt_u}
\end{equation}
Using (\ref{opt_u}) and Lemma \ref{cu1}, $U(u)$ can be simplified as, 
\begin{equation}
\small
\begin{split}
U(u)=2u_m\int_0^{-u_m\tanh{A(z)}}\tanh^{-1}(\nu/u_m)^TRd\nu=2u_m^2A^T(z)R\tanh{A(z)}+
u_m^2\sum_{i=1}^{m}R_i\log[1-\tanh^2{A_i(z)}]
\end{split}
\label{hjb_int1}
\end{equation}
where, $A\triangleq(1/2u_m)R^{-1}\hat{G}(z)^T\nabla{V}^* \in \mathbb{R}^m$. 
In the subsequent analysis, $\hat{F}\triangleq\hat{F}(z)$ and $\hat{G}\triangleq\hat{G}(z)$.

\subsection{Approximation of value function}
In this subsection, a single NN structure is utilized to approximate the value function. 
Leveraging Weierstrass approximation theorem \cite{abu2005nearly}, any smooth nonlinear mapping can be approximated by selecting a NN with sufficient number of nodes in the hidden layer. 
The value function is assumed to be smooth and hence can be approximated by a NN.

Consider that, there exist an ideal weight vector $W\in \mathbb{R}^{N_1}$ that can approximate the smooth value function as:
\begin{equation}
V(z)=W^T\vartheta(z)+\epsilon(z)
\label{appro}
\end{equation}
where, $\vartheta(z): \mathbb{R}^{2n} \to \mathbb{R}^{N_1}$ ($N_1$ being the number of nodes in hidden layer) is the regressor vector for critic NN and $\varepsilon(z): \mathbb{R}^{2n} \to \mathbb{R}^{N_1}$ is the approximation error. 
In the subsequent analysis, $\vartheta \triangleq \vartheta(z)$.
Then, gradient of $V(z)$ can be expressed as:
\begin{equation}
\nabla{V}=\nabla\vartheta^T W+\nabla\epsilon(z)
\label{vstar}
\end{equation}
Using (\ref{vstar}) in (\ref{opt_u}), the approximate optimal policy is obtained as,
\begin{equation}
\begin{split}
u=-u_m\tanh{\Big(\frac{1}{2u_m}R^{-1}\hat{G}^T \nabla{\vartheta}^TW+\epsilon_{uu}\Big)}
\end{split}
\label{ustar1}
\end{equation}
where, $\small \epsilon_{uu}=(1/2u_m)R^{-1}\hat{G}^T(z)\nabla{\varepsilon}(z)=\small [\varepsilon_{{uu}_{11}},\varepsilon_{{uu}_{12}},...,\varepsilon_{{uu}_{1m}}]^T \in \mathbb{R}^m$.
Using mean value theorem, it can be re-written as ,
\begin{equation}
u=-u_m\tanh{(\tau_1(z))}+\epsilon_{u}
\label{eq:us1}
\end{equation}
where $\tau_{1}(z)=(1/2u_m)R^{-1}\hat{G}^T \nabla{\vartheta}^TW=[\tau_{11},...,\tau_{1m}]^T \in \mathbb{R}^m$ and 
$\epsilon_{u}=-(1/2)((I_m-diag(\tanh^2{(q)}))R^{-1}\hat{G}^T\nabla{\epsilon})$ 
with $q \in \mathbb{R}^m$ and $q_i \in \mathbb{R}$ considered between $\tau_{1i}+\varepsilon_{uui}$ and $\epsilon_{uui}$ i.e., $i^{th}$ element of $\tau_{1}+\varepsilon_{uu}$ and $\epsilon_{uu}$, respectively such that tangent of $\tanh{(q)}$ is equal to the slope of the line joining $\tanh{(\tau_1+\epsilon_{uu})}$ and $\tanh{\varepsilon_{uu}}$.  
For the detailed proof, refer to Lemma \ref{mean_val_lem}. 
In the subsequent analysis, $\tau_1 \triangleq \tau_1(z)$. 
\section{Preliminaries of Integral Reinforcement Learning (IRL)}\label{vargain}
Now, in order to obviate the requirement of prior knowledge about drift dynamics in Bellman equation, basic formulation of IRL would be presented following \cite{modares2014optimal}. 
IRL can be considered as an alternate formulation of Bellman equation that does not require the drift dynamics.
It is obtained by integrating the infinitesimal version of (\ref{cost}) over time interval $[t-T,t]$. 
 \begin{equation}
V(z(t-T))=\int_{t-T}^te^{-\gamma(\tau-t+T)}[Q(z(\tau))+U(u(\tau))]d\tau+e^{-\gamma T}V(z(t))
\label{irl1}
\end{equation}
In order to preserve the equivalence between (\ref{Ha}) and (\ref{irl1}), $T$ must be selected as small as possible \cite{modares2014optimal} and \cite{vrabie2009neural}.

Now, using the approximation from (\ref{appro}) in (\ref{irl1}), the constant approximation error ($\varepsilon_B$) for a given choice of ideal weights $W$, can be expressed as:
\begin{equation}
\begin{split}
\int_{t-T}^te^{-\gamma(\tau-t+T)}[Q(z)+U_1(u)]d\tau+e^{-\gamma T}W^{T}\vartheta(z(t))
-W^{T}\vartheta(z(t-T))\equiv \varepsilon_{B}
\end{split}
\label{ehjb}
\end{equation}
where, $\varepsilon_B=\varepsilon(z(t-T))-e^{-\gamma T}\varepsilon(z(t))$ and $U_1(u)$ is obtained by substituting (\ref{vstar}) and (\ref{ustar1}) in the expression of $U(u)$ obtained in (\ref{hjb_int1}) as follows.
\begin{equation}
\begin{split}
U_1(u)=-W^T\nabla\vartheta \hat{G}u+
u_m^2\sum_{i=1}^{m}R_i\log[1-\tanh^2{(\tau_{1i}(z)+\varepsilon_{uui})}]
\end{split}
\label{hjb_int2}
\end{equation}
Also, 
\begin{equation}
\begin{split}
\int_{t-T}^te^{-\gamma(\tau-t+T)}\dot{\vartheta}d\tau=\int_{t-T}^te^{-\gamma(\tau-t+T)}\nabla{\vartheta}(\hat{F}+\hat{G}u)d\tau=\Delta\vartheta+\gamma\int_{t-T}^te^{-\gamma(\tau-t+T)}\vartheta d\tau
\end{split}
\label{eq:delvarth1}
\end{equation}

Let, $\Delta\vartheta(z(t))\triangleq e^{-\gamma T}\vartheta(z(t))-\vartheta(z(t-T))$, where $\vartheta$ is the regressor vector for critic NN, then (\ref{eq:delvarth1}) can also be written as,

\begin{equation}
\begin{split}
\Delta\vartheta(z(t))=\int_{t-T}^te^{-\gamma(\tau-t+T)}[\nabla{\vartheta}(\hat{F}+\hat{G}u)-\gamma\vartheta]d\tau
\end{split}
\label{delvarth}
\end{equation}
Now, substituting (\ref{delvarth}) and (\ref{hjb_int2}) in (\ref{ehjb}) and upon simplification HJB approximation error 
\begin{equation}
\begin{split}
&\int_{t-T}^te^{-\gamma(\tau-t+T)}[Q(z)+U_1(u)]d\tau+W^T\int_{t-T}^te^{-\gamma(\tau-t+T)}[\nabla{\vartheta}(\hat{F}+\hat{G}u)-\gamma\vartheta]d\tau=\varepsilon_{B}\\
&\int_{t-T}^te^{-\gamma(\tau-t+T)}[z^TQ_1z-\gamma W^T\vartheta+W^T\nabla{\vartheta}\hat{F}+u_m^2\sum_{i=1}^mR_i\log(1-\tanh^2{(\tau_{1i})})]d\tau=\varepsilon_{HJB}
\end{split}
\label{tracking_bell}
\end{equation}
where, $\varepsilon_{HJB}$ is given by 
\begin{equation}
\begin{split}
\varepsilon_{HJB}&=-\int_{t-T}^t(e^{-\gamma(\tau-t+T)}\nabla{\varepsilon}^T\hat{F}+u_m^2\sum_{i=1}^mR_i\log{(1-\tanh^2{(\tau_{1i}+\varepsilon_{uui}}))}\\
&+u_m^2\sum_{i=1}^mR_i\log{(1-\tanh^2{(\tau_{1i})})}-\gamma \varepsilon)d\tau
\end{split}
\end{equation}
Eq. (\ref{tracking_bell}) can also be written as,
\begin{equation}
\begin{split}
W^T\nabla{\vartheta}\hat{F}&=\varepsilon_{HJB}-e^TQe-U_1-W^T\nabla{\vartheta}\hat{G}u+W^T\gamma\vartheta\\
&=\varepsilon_{HJB}-e^TQe-U_1+u_mW^T\nabla{\vartheta}\hat{G}\tanh{(\tau_1)}+W^T\gamma\vartheta\\
\end{split}
\label{eq:trac_bell}
\end{equation}
In the equation above, $z^TQ_1z=e^TQe$.
Since ideal critic NN weights are not known, their estimates will be used instead. This results in approximate value as $\hat{V}(z)=\hat{W}^T\vartheta(z)$, where $\hat{W}$ is the updated weight. 
Thus, approximate optimal control $\hat{u}$ and instantaneous HJB error $\hat{e}$ are then obtained in Eqs. (\ref{approx_u}) and (\ref{hjb_error}), respectively. 
\begin{equation}
\hat{u}=-u_m\tanh{\Big(\frac{1}{2u_m}R^{-1}\hat{G}^T\nabla{\vartheta}^T\hat{W}}\Big)
\label{approx_u}
\end{equation}
\begin{equation}
\begin{split}
 \hat{e}(z(t))=\int_{t-T}^te^{-\gamma(\tau-t+T)}[Q(z)+\hat{U}(\hat{u})]d\tau+\hat{W}^{T}\Delta\vartheta
\end{split}
\label{hjb_error}
\end{equation}
In the subsequent analysis, $\hat{e}\triangleq\hat{e}(z(t))$, $\hat{U}\triangleq\hat{U}(\hat{u})$ i.e., the approximated version of $U(u)$ obtained by substituting $\nabla{\hat{V}}=\nabla\vartheta^T\hat{W}$ as approximated gradient of value function in $A(z)$ in (\ref{hjb_int1}) and given as, (Eq. (\ref{Uhat})) 
\begin{equation}
\begin{split}
\hat{U}=2u_m^2\tau_{2}^T(z)R\tanh{\tau_2(z)}+
u_m^2\sum_{i=1}^{m}R_i\log[1-\tanh^2{\tau_{2i}(z)}]
\end{split}
\label{Uhat}
\end{equation}
where, $\tau_2(z)=(1/2u_m)R^{-1}\hat{G}^T \nabla{\vartheta}^T\hat{W} \in \mathbb{R}^m$. In the subsequent analysis, $\tau_2\triangleq\tau_2(z)$. 
Now using (\ref{delvarth}) and subtracting (\ref{tracking_bell}) from (\ref{hjb_error}), the instantaneous HJB error (from Eq. (\ref{hjb_error})) can be approximated in terms of $\tilde{W}=W-\hat{W}$, as described in \cite{modares2014optimal}.
\begin{equation}
\begin{split}
\hat{e}(\tilde{W})=-\Delta\vartheta^T\tilde{W}+\int_{t-T}^te^{-\gamma(\tau-t+T)}\tilde{W}^TMd\tau+E
\end{split}
\label{eq:ewtil}
\end{equation}

\begin{equation}
\begin{split}
M&=\nabla{\vartheta}\hat{G}u_m(\tanh{(\tau_2)}-sgn(\tau_2))~\\
E&=\int_{t-T}^te^{-\gamma(\tau-t+T)}\Big(W^T\nabla{\vartheta}\hat{G}u_m(sgn(\tau_{2})-sgn(\tau_{1}))+u_m^2R(\varepsilon_{\tau_2}-\varepsilon_{\tau_1})+\varepsilon_{HJB}\Big)d\tau
\end{split}
\label{em}
\end{equation}
where, $\varepsilon_{\tau_1}$ and $\varepsilon_{\tau_2}$ are bounded approximation errors.
\section{Online variable gain gradient descent-based update law and its stability proof}\label{paramupd}


\subsection{Variable gain gradient descent-based critic update law in On-policy IRL}
Unlike traditional NN weight update laws for IRL algorithm based on gradient descent with constant learning rate such as \cite{vamvoudakis2014online} and \cite{modares2014optimal}, a novel online parameter update law that is driven by variable gain gradient descent is presented in this paper in order to update the critic NN weights in IRL framework as below.  

\begin{equation}
\begin{split}
\dot{\hat{W}}&=-\alpha(|\hat{e}|^{k_2}+l)\bar{\vartheta}\hat{e}+\alpha\Xi(z,\hat{u})\left(\frac{\nabla{\vartheta}^T}{2}\hat{G}R^{-T}(I_m-\mathcal{B})\hat{G}^T\nabla{\vartheta}\hat{W}-\nabla{\vartheta}\dot{z}\right)\\
&+\alpha(|\hat{e}|^{k_2}+l)\left((K_1\varphi^T-K_2)\hat{W}-\bar{\vartheta}\int_{t-T}^te^{-\gamma(\tau-t+T)}\hat{W}^TMd\tau \right)
\end{split}
\label{tuning_law1}
\end{equation}
Here, $0<l\leq 1$, $\alpha>0,~k_2>0$, and 
$K_1 \in \mathbb{R}^{N_1}~ K_2 \in \mathbb{R}^{N_1\times N_1}$ are constants. 
Other terms used in the update law above are, $\bar{\vartheta}=\Delta\vartheta/m_s^2$,   $\varphi=\Delta\vartheta/m_s$, $m_s=1+\Delta\vartheta^T\Delta\vartheta$, $\mathcal{B}=diag\{\tanh^2{(\tau_{2i}(z))}\} \in \mathbb{R}^{m\times m},~i=1,2...,m$, and $M$ is as in (\ref{em}).
A piece-wise continuous indicator function is defined as,
\vspace{-.1cm}
\begin{equation}
\Xi(z,\hat{u})=  \begin{cases} 
      0, & if~ \Sigma <0  \\
      1, & otherwise
   \end{cases}
\end{equation}
Where, $\Sigma$ is the rate of variation of Lyapunov function ($L=V+(1/2)tr(\tilde{W}_1^T\Gamma_1^{-1}\tilde{W}_1)+(1/(2\alpha))\tilde{W}^T\tilde{W}$ (See Eq. (\ref{eq:final_L}), where $V$ is the value function and $\tilde{W}_1$ and $\tilde{W}$ are the error in NN weights of identifier and critic, respectively) along the augmented system trajectories and is denoted by,
\begin{equation}
\Sigma=\dot{L}=\frac{\partial L}{\partial z}\dot{z}=\frac{\partial {V}}{\partial z}\dot{z}=\hat{W}^T\nabla{\vartheta}\dot{z}
\end{equation}
Note that since $\Sigma$ could be computed by numerical differentiation in (\ref{tuning_law1}), explicit knowledge of augmented drift dynamics ($\hat{F}$) is not needed in the update law.
Additionally, the control coupling dynamics i.e., $\hat{G}$ required in (\ref{tuning_law1}) and (\ref{approx_u}) is generated by the online NN identifier introduced in the previous section. 
{In order to facilitate the analytical treatment of the stability analysis, $k_2$ in (\ref{tuning_law1}) will be set to $1$ i.e., $k_2=1$.}

Unlike actor-critic dual approximation structure presented in \cite{vamvoudakis2014online} and \cite{modares2014optimal}, which also requires an initial stabilizing controller, this paper leverages an identifier-critic structure with a stabilizing term (following from such a stabilizing term presented in \cite{dierks2010optimal} for RL formulation) in the critic update law, thus obviating the requirement of initial stabilizing controller in IRL algorithm.


The novelty of the critic scheme presented in this paper i.e., (\ref{tuning_law1}) is explained below.
\begin{itemize}
\item Since IRL framework is utilized, it is inherently independent of prior knowledge about drift dynamics and control coupling dynamics present in (\ref{tuning_law1}) is fed by ER-enhanced identifier. 
Thus, it is different from \cite{mishra2019criticonly}, wherein explicit knowledge of control coupling dynamics was used.
\item The learning rate in critic update law is a variable and can scale the learning rate based on instantaneous value of the HJB error. The first term in (\ref{tuning_law1}) originates from gradient descent (GD) and is responsible for minimizing the instantaneous HJB error.    
The learning rate of this term is $\alpha(|\hat{e}|^{k_2}+l)$, which is a function of approximated value of instantaneous HJB error.
This facilitates accelerating the learning speed when the HJB error is large and dampening the learning speed as the HJB error becomes smaller.

\item The second term in (\ref{tuning_law1}) is responsible for stability.
An indicator function as proposed in \cite{dierks2010optimal} and \cite{liu2015reinforcement} is utilized here so that the second term becomes active only when the Lyapunov function is non-decreasing along the augmented system trajectories. However, the expression of this term is different from
\cite{dierks2010optimal} or \cite{liu2015reinforcement}, and is explained below,
Let $\Sigma<0$ denote the case when $V$ (refer to Eq. (\ref{cost})) is strictly decreasing along the augmented system trajectory, hence if $\Sigma>0$, then the cost $V$ (see Eq. (\ref{cost})) is growing along the augmented system trajectory in closed loop system.
When $\Sigma \geq0$, then gradient descent can be leveraged to ensure decreasing $\Sigma$ over $\hat{W}$, i.e.,
\begin{equation}
\small
\begin{split}
-\alpha\frac{\partial \Sigma}{\partial \hat{W}}&=-\alpha\frac{\partial (\hat{W}^T\nabla{\vartheta}\dot{z})}{\partial \hat{W}}=-\alpha\frac{\partial (\hat{W}^T\nabla{\vartheta}(\hat{F}+\hat{G}\hat{u}))}{\partial \hat{W}}\\
&=\alpha\left(\frac{\nabla{\vartheta}^T}{2}\hat{G}R^{-1}(I_m-\mathcal{B})\hat{G}^T\nabla{\vartheta}\hat{W}-\nabla{\vartheta}\dot{z}\right)
\end{split}
\end{equation}
\item The last term in (\ref{tuning_law1}) improves the robustness of the update law against the variation in function approximation errors etc and help in controlling the size of the residual sets for error in critic NN as will become clear in the stability proof of Theorem \ref{thm:crit} and the remarks its subsequent discussion.
\end{itemize}

The error in critic weight is defined as, $\tilde{W}=W-\hat{W}$. Utilizing (\ref{eq:ewtil}), the critic weight error dynamics is then given as, 
\begin{equation}
\begin{split}
\dot{\tilde{W}}&=\alpha g_1(z(t))\bar{\vartheta}[-\Delta\vartheta^T\tilde{W}+\int_{t-T}^te^{-\gamma(\tau-t+T)}\tilde{W}Md\tau+E]\\
& -\alpha\Xi(z,\hat{u})\left(\frac{\nabla{\vartheta}^T}{2}\hat{G}R^{-1}(I_m-\mathcal{B})\hat{G}^T\nabla{\vartheta}\hat{W}-\nabla{\vartheta}\dot{z}\right) \\
&+\alpha g_1(z(t))\Big[\bar{\vartheta}\int_{t-T}^te^{-\gamma(\tau-t+T)}\hat{W}Md\tau+(K_2-K_1\varphi^T)\hat{W}\Big]
\end{split}
\label{crit_err}
\end{equation}
where, $g_1(z(t))\triangleq|\hat{e}(z(t))|+l$, where $0<l\leq 1$ is a small positive constant and $\hat{e}$ is instantaneous HJB error. 
In the subsequent analysis $g_1 \triangleq g_1(z(t))$.
\vspace{-.5cm}
\subsection{Stability proof of the online update law}
In the critic update law (See Eq. (\ref{tuning_law1})) presented in this paper, there exists a stabilizing term (second term) in the update law (\ref{tuning_law1}) that comes into effect when Lyapunov function starts growing along the system trajectories. 
This term helps in pulling the system out of instability ensuring that the augmented system trajectories remain finite when starting from an initial point $z_0$ within a compact set $\Omega \subset R^{2n}$.
This fact leads to the following two assumptions.
\begin{assumption}\label{weight_as}
\textnormal{Ideal NN weight vector $W$ is considered to be bounded, i.e., $\|W\| \leq  W_M$. There also exist positive constants $b_{\epsilon}$ and $b_{\epsilon z}$ that bound the approximation error and its gradient such that $\|\varepsilon(z)\| \leq b_{\epsilon}$ and $\|\nabla{\varepsilon}\| \leq b_{\epsilon z}$. }
\end{assumption}
This is in line with Assumptions 3b of \cite{vamvoudakis2010online},  Assumption 5 of \cite{liu2015reinforcement} and Assumptions made in Section 4.1 in \cite{modares2014optimal}
\begin{assumption}\label{regres_as}
\textnormal{Critic regressors are considered to be bounded as well: $\|\vartheta(z)\| \leq b_{\vartheta}$ and $\|\nabla{\vartheta}(z)\| \leq b_{\vartheta z}$. }
\end{assumption}
This is in line with Assumption 4 of \cite{yang2015robust}, Assumption 6 of \cite{liu2015reinforcement} and Assumption 4 of \cite{modares2014optimal}


\begin{theorem}\label{thm:crit}
Let the CT nonlinear augmented system be described by (\ref{aug}) with associated Bellman equation as (\ref{irl1}) and approximate optimal control as (\ref{approx_u}), then, under the assumptions \ref{uncert}-\ref{regres_as} and if the regressor $\Delta\vartheta$ is persistently excited (PE), then the tuning laws given by (\ref{update}) for identifier and (\ref{tuning_law1}) for critic NN guarantee that the quantities $e$, $\tilde{W}_1$ and $\tilde{W}$ are uniform ultimate bounded (UUB), where, $e$ is the state error and $\tilde{W}_1$, $\tilde{W}$ are the error in NN weights of identifier and critic NNs, respectively.
\end{theorem}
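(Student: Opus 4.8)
The plan is to establish UUB through a single composite Lyapunov function that stacks the value function with the two weight-error energies, namely
\begin{equation}
L = V + \tfrac{1}{2}tr(\tilde{W}_1^T\Gamma_1^{-1}\tilde{W}_1) + \tfrac{1}{2\alpha}\tilde{W}^T\tilde{W},
\end{equation}
and to show that its derivative along \eqref{aug} is bounded above by a negative-definite function of $(e,\tilde{W}_1,\tilde{W})$ plus a constant, so that $\dot{L}<0$ outside a compact residual set. I would differentiate $L$ and split $\dot{L}=\dot{V}+\dot{V}_1+\dot{V}_2$ into the value-function, identifier, and critic contributions, then bound each separately before recombining. For the identifier term $\dot{V}_1$, nothing new is required: Theorem \ref{thm:identifier} already supplies $\dot{V}_1\le-\lambda_{min}(P)\|\tilde{W}_1\|^2+\nu_1\|\tilde{W}_1\|$ for $t\ge T_1$ under the PE of $\Phi$, which is negative definite outside a ball of radius $\nu_1/\lambda_{min}(P)$.

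The value-function term $\dot{V}=\nabla{V}\,(\hat{F}+\hat{G}\hat{u})$ is handled through the indicator $\Xi$ together with the HJB relation \eqref{hjb1}. On the set $\{\Sigma<0\}$ the approximate cost is strictly decreasing, so $\dot{V}$ contributes negatively; on $\{\Sigma\ge0\}$ the second (stabilizing) term of \eqref{tuning_law1}, which equals the negative gradient $-\alpha\,\partial\Sigma/\partial\hat{W}$, actively drives $\Sigma$ downward. Substituting the HJB identity exposes the penalty $-e^TQe$ (recall $z^TQ_1z=e^TQe$) as the source of contraction in $e$, while the discount term $\gamma V$, the control-approximation mismatch $\nabla{V}\hat{G}(\hat{u}-u^*)$, and the Lipschitz bound $\|\hat{F}\|\le L_{f1}\|e\|+L_{f2}$ from Assumption \ref{as:e} are all bounded on the compact operating set and are carried into the residual constant.

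The heart of the argument is $\dot{V}_2=\tfrac{1}{\alpha}\tilde{W}^T\dot{\tilde{W}}$, obtained by substituting \eqref{crit_err}. The dominant contribution is $-g_1\,\tilde{W}^T\bar{\vartheta}\Delta\vartheta^T\tilde{W}=-g_1\|\varphi^T\tilde{W}\|^2$ (since $\bar{\vartheta}\Delta\vartheta^T=\varphi\varphi^T$ with $\varphi=\Delta\vartheta/m_s$), and because $g_1=|\hat{e}|+l\ge l>0$ and $\Delta\vartheta$ is PE, this furnishes a negative-definite term in $\tilde{W}$ in the integral sense guaranteed by the PE inequality. The remaining pieces -- the memory integral in $M$, the stabilizing-term contribution, the robustifying term $(K_2-K_1\varphi^T)\hat{W}$, and the residual $E$ of \eqref{em} -- are all bounded in norm using Assumptions \ref{weight_as}--\ref{regres_as}, the boundedness of $\hat{G}$, and the normalized/filtered regressors; I would absorb their $\tilde{W}$-linear parts into the negative quadratic via Young's inequality and collect their constant parts into the residual bound.

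Recombining, I would obtain an estimate of the form $\dot{L}\le -a_1\|e\|^2-a_2\|\tilde{W}_1\|^2-a_3\|\varphi^T\tilde{W}\|^2+b_1\|e\|+b_2\|\tilde{W}_1\|+b_3\|\tilde{W}\|+c$, and complete the squares so that $\dot{L}<0$ whenever $(e,\tilde{W}_1,\tilde{W})$ leaves a computable compact set, which is precisely UUB. The main obstacle I anticipate is the cross-coupling introduced by the identifier: the estimated gain $\hat{G}$, which depends on $\hat{W}_1$, enters $\bar{\vartheta}$, $M$, $\tau_2$ and the stabilizing term, so the $\tilde{W}$-analysis is not decoupled from $\tilde{W}_1$. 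Controlling these terms requires bounding the identifier-induced perturbation of $\hat{G}$ by $\|\tilde{W}_1\|$ via Assumption \ref{as_phi}, and then choosing the gains $\alpha$, $K_1$, $K_2$, $\Gamma_1$ and the PE constants so that the combined quadratic form in $(e,\tilde{W}_1,\tilde{W})$ remains negative definite -- this gain selection, together with the correct exploitation of the integral (rather than pointwise) PE positivity through the normalization, is where the real work lies.
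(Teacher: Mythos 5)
Your Lyapunov candidate, three-way splitting of $\dot{L}$, reuse of Theorem \ref{thm:identifier} for the identifier block, and the case analysis on the indicator $\Xi$ all coincide with the paper's proof. However, your treatment of the critic block has a genuine gap, and it sits exactly where you placed the "remaining pieces." Your announced final estimate is
$\dot{L}\le -a_1\|e\|^2-a_2\|\tilde{W}_1\|^2-a_3\|\varphi^T\tilde{W}\|^2+b_1\|e\|+b_2\|\tilde{W}_1\|+b_3\|\tilde{W}\|+c$,
and this cannot be closed by completing squares: $\varphi\varphi^T$ is rank one, so $\varphi^T\tilde{W}$ can vanish identically while $\|\tilde{W}\|$ grows without bound, i.e., $-a_3\|\varphi^T\tilde{W}\|^2$ never dominates $+b_3\|\tilde{W}\|$. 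Your appeal to PE of $\Delta\vartheta$ "in the integral sense" does not repair this, because the complete-the-squares step is pointwise in time; an integral excitation property cannot be invoked inside an instantaneous bound without extra machinery that you do not supply (and that the paper does not use either). In fact, the paper's negative definiteness in the full $\tilde{W}$ does \emph{not} come from the gradient-descent term $-g_1\tilde{W}^T\varphi\varphi^T\tilde{W}$ at all; it comes from the very term you relegate to a norm-bounded perturbation. Expanding $\tilde{W}^T(K_2-K_1\varphi^T)\hat{W}$ with $\hat{W}=W-\tilde{W}$ as in (\ref{eq:K}) produces the genuine quadratic $-\tilde{W}^TK_2\tilde{W}$ plus cross terms, and the paper packages these together with $-|\varphi^T\tilde{W}|^2$ into the stacked vector $\mathcal{S}=[\tilde{W}^T\varphi,\tilde{W}^T]^T$ of (\ref{j_def}) and the matrix $M_1$ of (\ref{eq:MN}); choosing $K_1,K_2$ so that $M_1\succ0$ gives $-\lambda_{min}(M_1)\|\mathcal{S}\|^2\le-\lambda_{min}(M_1)\|\tilde{W}\|^2$, which is what actually permits the square completion and the UUB bound (\ref{ld1}).

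Worse, handling $(K_2-K_1\varphi^T)\hat{W}$ by "bounding in norm and absorbing via Young's inequality" is not merely lossy but self-defeating: $\hat{W}$ is not a priori bounded, and writing $\|\hat{W}\|\le W_M+\|\tilde{W}\|$ turns this term into a \emph{positive} $\|K_2\|\,\|\tilde{W}\|^2$ contribution that overwhelms whatever negative quadratic you have. The sign structure $-\tilde{W}^TK_2\tilde{W}$ must be preserved, not estimated away. Two smaller points: (i) your instinct that the coupling through $\hat{G}$ (which depends on $\hat{W}_1$) is the hard part is misplaced --- the paper simply bounds $\|\hat{G}\|\le g_M$ via Assumption \ref{uncert} and never needs to propagate $\tilde{W}_1$ into the critic block; and (ii) you never use the variable-gain structure $g_1\le A_1\|\mathcal{S}\|+b_N$, which is what produces the cubic term in (\ref{eq:ldot11}) and the tightened residual sets that are the point of Theorem \ref{thm:crit}'s remarks; omitting it costs you the tighter bounds even after the main gap is fixed.
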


\begin{proof}
Let the Lyapunov candidate be
\begin{equation}
  L=V+\frac{1}{2}tr(\tilde{W}_1^T\Gamma_1^{-1}\tilde{W}_1)+\frac{1}{2\alpha}\tilde{W}^T\tilde{W}  
\label{eq:final_L}
\end{equation}
where, $\Gamma_1 >0$ is the learning rate for the identifier NN as defined after (\ref{update}) and $\alpha>0$ is the constant baseline learning rate of the critic NN.  

Then, time derivative of Lyapunov function becomes,
\begin{equation}
\small
\begin{split}
\dot{L}
&=\dot{V}+tr(\tilde{W}_1^T\Gamma_1^{-1}\dot{\tilde{W}}_1)+\tilde{W}^T\dot{\tilde{W}}/\alpha = \underbrace{W^T\nabla{\vartheta}(\hat{F}-u_m\hat{G}\tanh{(\tau_2)})+\varepsilon_0}_\text{= $\dot{V}$}+tr(\tilde{W}_1^T\Gamma_1^{-1}\dot{\tilde{W}}_1)+\tilde{W}^T\dot{\tilde{W}}/\alpha
\end{split}
\label{eq:Ldot}
\end{equation}
In (\ref{eq:Ldot}), $\dot{V}=\nabla{V}\dot{z}$ and hence, $\dot{V}=\nabla{\vartheta}^TW\dot{z}+\underbrace{\nabla{\varepsilon}\dot{z}}_\text{$\varepsilon_{0}$}$, therefore, $\varepsilon_0=\nabla{\varepsilon}(\hat{F}-u_m\hat{G}\tanh{(\tau_2)})$.
Utilizing the expression of $W^T\nabla{\vartheta}\hat{F}$ from (\ref{eq:trac_bell}) in (\ref{eq:Ldot}),
\begin{equation}
\small
\begin{split}
\dot{L}&=\underbrace{-e^TQe-U_1+\gamma W^T\vartheta+u_mW^T\nabla{\vartheta}\hat{G}(\tanh{(\tau_1)}-\tanh{(\tau_2)})+\varepsilon_{HJB}+\varepsilon_0}_\text{= $\dot{V}$}+tr(\tilde{W}_1^T\Gamma_1^{-1}\dot{\tilde{W}}_1)+\tilde{W}^T\dot{\tilde{W}}/\alpha
\end{split}
\end{equation}
 
Now, using Assumptions \ref{as:e} and \ref{weight_as} the bound over it can be written as, 
\begin{equation}
\begin{split}
\|\varepsilon_0\| \leq b_{\varepsilon z}(L_{f1}\|e\|+L_{f2}+g_Mu_m)
\end{split}
\label{lfe}
\end{equation}
For some positive constant $L_{f1}$ and $L_{f2}$.
Utilizing assumptions \ref{as:e}, \ref{weight_as}, \ref{regres_as}, Eq. (\ref{lfe}), Lemma \ref{tanhlem}, the bound over $\dot{V}$ can be expressed as,
\begin{equation}
\begin{split}
\dot{V} \leq -\lambda_{min}(Q)\|e\|^2+\gamma W_Mb_{\vartheta}+b_{\varepsilon z}(L_{f1}\|e\|+L_{f2}+g_Mu_m)+\varepsilon_h+u_mW_Mb_{\vartheta z}g_MT_m
\end{split}
\label{vdt}
\end{equation}
From assumptions \ref{as:e} and \ref{weight_as}, $\varepsilon_{HJB}$ can be assumed to be bounded as $|\varepsilon_{HJB}|\leq \varepsilon_h$ 
Using error dynamics of weights (\ref{crit_err}), the last term of $\dot{L}$, i.e., $\tilde{W}^T\dot{\tilde{W}}/\alpha$ becomes: 
\begin{equation}
\small
\begin{split}
\tilde{W}^T&\alpha^{-1}\dot{\tilde{W}}
=g_1\tilde{W}^T\bar{\vartheta}\Big[-\Delta\vartheta^T\tilde{W}+\int_{t-T}^te^{-\gamma(\tau-t+T)}\tilde{W}Md\tau+E\Big]\\
&-\tilde{W}^T\Xi(z,\hat{u})\left(\frac{\nabla{\vartheta}^T}{2}\hat{G}R^{-1}(I_m-\mathcal{B})\hat{G}^T\nabla{\vartheta}\hat{W}-\nabla{\vartheta}\dot{z}\right)\\ 
&+g_1\tilde{W}^T\bar{\vartheta}\int_{t-T}^te^{-\gamma(\tau-t+T)}\hat{W}Md\tau+\underbrace{g_1\tilde{W}^T(K_2\hat{W}-K_1\varphi^T\hat{W})}_\text{$\triangleq c_1$} \\
&=\underbrace{-g_1\tilde{W}^T\varphi\varphi^T\tilde{W}+g_1\tilde{W}^T(\varphi^T/m_s)E+g_1\tilde{W}^T\beta_1(z)+g_1\tilde{W}^T(K_2\hat{W}-K_1\varphi^T\hat{W})}_\text{$\triangleq A$}\\
&\underbrace{-\tilde{W}^T\Xi(z,\hat{u})\left(\frac{\nabla{\vartheta}^T}{2}\hat{G}R^{-1}(I_m-\mathcal{B})\hat{G}^T\nabla{\vartheta}{W}\right)}_\text{$\triangleq S$}\\
&\underbrace{+\tilde{W}^T\Xi(z,\hat{u})\left(\frac{\nabla{\vartheta}^T}{2}\hat{G}R^{-1}(I_m-\mathcal{B})\hat{G}^T\nabla{\vartheta}\tilde{W}\right)-\Xi(z,\hat{u})\tilde{W}^T\nabla{\vartheta}\dot{z}}_\text{$\triangleq S$}
\end{split}
\label{ldot4}
\end{equation}
where, $\Delta\vartheta$ is as defined after (\ref{eq:delvarth1}), $\bar{\vartheta}=\varphi/m_s$, $\varphi=\Delta\vartheta/m_s$,  $\beta_1(z)=(\varphi^T/m_s)I_0$ where $I_0=\int_{t-T}^te^{-\gamma(\tau-t+T)}W^TMd\tau$ and $M,E$ as defined in (\ref{em}) and $m_s$ as defined after (\ref{tuning_law1}).
The term $S$ in (\ref{ldot4}) is the stabilizing term written as a function of $W$ and $\tilde{W}$ i.e.,
\begin{equation}
\begin{split}
S&=-\tilde{W}^T\Xi(z,\hat{u})\left(\frac{\nabla{\vartheta}^T}{2}\hat{G}R^{-1}(I_m-\mathcal{B})\hat{G}^T\nabla{\vartheta}{W}\right)\\
&+\tilde{W}^T\Xi(z,\hat{u})\left(\frac{\nabla{\vartheta}^T}{2}\hat{G}R^{-1}(I_m-\mathcal{B})\hat{G}^T\nabla{\vartheta}\tilde{W}\right)-\Xi(z,\hat{u})\tilde{W}^T\nabla{\vartheta}\dot{z}
\end{split}
\label{eq:St}
\end{equation}
The term $c_1$ in (\ref{ldot4}) can be expressed as:
\begin{equation}
\begin{split}
 \tilde{W}^T(K_2\hat{W}-K_1\varphi^T\hat{W})=\tilde{W}^TK_2W-\tilde{W}^TK_2\tilde{W}-\tilde{W}^TK_1\varphi^TW+\tilde{W}^TK_1\varphi^T\tilde{W}
\end{split}
\label{eq:K}
\end{equation}
Now, Let us define,  
\begin{equation}
\begin{split}
  \mathcal{S}\triangleq[\tilde{W}^T\varphi,\tilde{W}^T]^T  
\end{split}
\label{j_def}
\end{equation}
Then using (\ref{eq:K}) and (\ref{j_def}), the term $A$ in (\ref{ldot4}) can be compactly written as, $A=g_1(-\mathcal{S}^TM_1\mathcal{S}+\mathcal{S}^TN)$.
Therefore, $\tilde{W}^T\dot{\tilde{W}}/\alpha$ can be written as,
\begin{equation}
\begin{split}
\dot{\tilde{W}}^T\tilde{W}/\alpha=A+S\leq g_1(-\lambda_{min}(M_1)\|\mathcal{S}\|^2+b_N\|\mathcal{S}\|)+S
\end{split}
\label{eq:wtildedot}
\end{equation}
where, $N\in \mathbb{R}^{N_1+1}$ and $M_1\in \mathbb{R}^{(N_1+1)\times (N_1+1)}$ are defined as,
\begin{equation}
\small
\begin{split}
M_1=\begin{pmatrix}
I    &      -\frac{1}{2}K^T_1 \\
-\frac{1}{2}K_1  &  K_2
\end{pmatrix};~~N=\begin{pmatrix}
E/m_s  \\
(\beta_1(z)+K_2W_c-K_1\varphi^TW)
\end{pmatrix}
\end{split}
\label{eq:MN}
\end{equation}
where, $b_N$ is the upper bound over vector $N$ i.e., 
\begin{equation}
\|N\| \leq b_N=\max(\|N\|)
\label{bn}
\end{equation}
where, $N_1$ is the size of the regressor vector for critic NN.
In (\ref{eq:MN}), if $K_1$ and $K_2$ are chosen such that $K_2$ is symmetric, then $M_1$ becomes symmetric.
Further, in order to ensure that $\lambda_{min}(M_1)$ is real and positive, $K_1$ and $K_2$ should be selected such that $M_1$ is positive definite.
Further, $A$ can be developed by leveraging $g_1$ as a function of $\tilde{W}$.
{From (\ref{eq:ewtil}), $g_1$ as a function of $\tilde{W}$ is, 
$g_1=|\hat{e}(\tilde{W})|^{k_2}+l$. 
}
From (\ref{eq:ewtil}) and using small $T$, 
\begin{equation}
\begin{split}
g_1&=|-\Delta\vartheta^T\tilde{W}+\int_{t-T}^te^{-\gamma(\tau-t+T)}\tilde{W}^TMd\tau+E|+l\\
&\leq \|E\|+\|\tilde{W}\|\|\Delta\vartheta\|+\|Te^{-\gamma T}\tilde{W}M\|+l
\end{split}
\end{equation}
Finally, utilizing $e^{-\gamma T}\approx1-\gamma T$ and $\|\tanh{(\tau_2)}-sgn(\tau_2)\| \leq 2\sqrt{m}$, where $\tau_2 \in \mathbb{R}^m$.
\begin{equation}
\begin{split}
g_1&\leq \|\tilde{W}\|(\|\Delta\vartheta\|+\|Te^{-\gamma T}M\|)+\|E\|+l\\
&\leq\|\tilde{W}\|(\|\Delta\vartheta\|+\|Tb_{\vartheta z}g_Mu_m2\sqrt{m}\|)+\|E\|+l
\end{split}
\end{equation}
Utilizing Assumption \ref{regres_as} and the definition of $\Delta\vartheta$ after (\ref{eq:delvarth1}), the bound over $g_1$ can be written as,
\begin{equation}
g_1\leq \|\tilde{W}\|\underbrace{(\|\gamma Tb_{\vartheta}\|+\|Tb_{\vartheta z}g_Mu_m2\sqrt{m}\|)}_\text{$\triangleq A_1$}+\underbrace{\|E\|+l}_\text{$\triangleq A_2$}
\end{equation}
It could be noted that $E$ is one of the component of vector $N$, and by appropriately selecting $K_1$ and $K_2$ in $N$, and selecting a very small offset $l$, it can be ensured that $A_2=\|E\|+l \leq b_N$. 
Also, from (\ref{j_def}), $\|\tilde{W}\| \leq \|\mathcal{S}\|$ 
, therefore,
\begin{equation}
\begin{split}
g_1\leq A_1\|\mathcal{S}\|+b_N
\end{split}
\label{eq:g1}
\end{equation}
 
Now, using (\ref{eq:w1dot}), (\ref{vdt}), (\ref{eq:wtildedot}) and (\ref{eq:g1}) in (\ref{eq:Ldot}) the Lyapunov derivative can be rendered as,
\begin{equation}
\begin{split}
\dot{L} &\leq -\lambda_{min}(Q)\|e\|^2+\gamma W_Mb_{\vartheta}+b_{\varepsilon z}(L_{f1}\|e\|+L_{f2}+g_Mu_m)+\varepsilon_h+u_mW_Mb_{\vartheta z}g_MT_m\\
&-\lambda_{min}(P)\|\tilde{W}_1\|^2+\nu_1\|\tilde{W}_1\|+(A_1\|\mathcal{S}\|+b_N)(-\lambda_{min}(M_{1})\|\mathcal{S}\|^2+b_N\|\mathcal{S}\|)+S
\end{split}
\label{ldot}
\end{equation}
Based on the rate of variation of Lyapunov function along the system trajectories, 
Eq. (\ref{ldot}) can be analyzed for the following two cases as detailed below.

\textbf{Case (i)}:
$\Xi(z,\hat{u})=0$.
\\
This case implies that the Lyapunov function is strictly decreasing along the augmented system trajectories, and hence the stabilizing term, i.e., $S$ in (\ref{ldot}) is $0$. 
\begin{equation}
\small
\begin{split}
\dot{L}
&\leq -\lambda_{min}(Q)\|e\|^2+a_1\|e\|+\underbrace{\gamma W_Mb_{\vartheta}+b_{\varepsilon z}L_{f2}+b_{\varepsilon z}g_Mu_m+\varepsilon_h+u_mW_Mb_{\vartheta z}g_MT_m}_\text{$\triangleq a_2$}\\
&-\lambda_{min}(P)\|\tilde{W}_1\|^2+\nu_1\|\tilde{W}_1\|-\underbrace{\lambda_{min}(M_1)A_1\|\mathcal{S}\|^{3}+b_NA_1\|\mathcal{S}\|^{2}-\lambda_{min}(M)b_N\|\mathcal{S}\|^2+b_N^2\|\mathcal{S}\|}_\text{$A$}\\
&\leq -\lambda_{min}(Q)\left(\|e\|-\frac{a_1}{2\lambda_{min}(Q)}\right)^2+\underbrace{\frac{a_1^2}{4\lambda_{min}(Q)}+a_2+\frac{\nu_1^2}{4\lambda_{min}(P)}}_\text{$\triangleq \aleph$}-\lambda_{min}(P)\left(\|\tilde{W}_1\|-\frac{\nu_1}{2\lambda_{min}(P)}\right)^2\\
&-\underbrace{\lambda_{min}(M_1)A_1\|\mathcal{S}\|^{3}+b_NA_1\|\mathcal{S}\|^{2}-\lambda_{min}(M_1)b_N\|\mathcal{S}\|^2+b_N^2\|\mathcal{S}\|}_\text{$A$}
\end{split}
\label{eq:ldot11}
\end{equation}
where, $a_1\triangleq b_{\varepsilon z}L_{f1}$. 
In order for $\dot{L}$ to be negative definite, following inequalities should hold true,
\begin{equation}
\begin{split}
 -\lambda_{min}&(Q)\left(\|e\|-\frac{a_1}{2\lambda_{min}(Q)}\right)^2+\aleph<0\\
\Rightarrow&\|e\| >\frac{a_1}{2\lambda_{min}(Q)}+\sqrt{\frac{\aleph}{\lambda_{min}(Q)}}
\end{split}
\label{declyap}
\end{equation}
or
\begin{equation}
\begin{split}
-\lambda_{min}&(P)\left(\|\tilde{W}_1\|-\frac{\nu_1}{2\lambda_{min}(P)}\right)^2+\aleph<0\\   
 \Rightarrow&\|\tilde{W}_1\| > \frac{\nu_1}{2\lambda_{min}(P)}+\sqrt{\frac{\aleph}{\lambda_{min}(P)}}
\end{split}
\label{eq:w1uub}
\end{equation}
and 
\begin{equation}
\begin{split}
-\lambda_{min}(M_1)A_1\|\mathcal{S}\|^{3}+b_NA_1\|\mathcal{S}\|^{2}-\lambda_{min}(M_1)b_N\|\mathcal{S}\|^2+b_N^2\|\mathcal{S}\|<0
\end{split}
\label{script_s}
\end{equation}
For $\|\mathcal{S}\|\neq 0$, Eq. (\ref{script_s}) implies,
\begin{equation}
\begin{split}
-&A_1\lambda_{min}(M_1)\|\mathcal{S}\|^2+\underbrace{(b_NA_1-\lambda_{min}(M_1)b_N)}_\text{$\triangleq B_1$}\|\mathcal{S}\|+b_N^2<0\\
\Rightarrow&\|\mathcal{S}\|>\frac{B_1}{2A_1\lambda_{min}(M_1)}+\sqrt{\frac{B_1^2}{4A_1^2\lambda^2_{min}(M_1)}+\frac{b_N^2}{A_1\lambda_{min}(M_1)}}\\
\Rightarrow&\|\mathcal{S}\|>\frac{b_N}{\lambda_{min}(M_1)}\underbrace{\left[\frac{1}{2}\left(1-\frac{\lambda_{min}(M_1)}{A_1}\right)+\sqrt{\frac{1}{4}\left(1-\frac{\lambda_{min}(M_1)}{A_1}\right)^2-\frac{\lambda_{min}(M_1)}{A_1}}\right]}_\text{$\triangleq \Gamma$}
\end{split}
\label{eq:S}
\end{equation}
Let, $\gamma_1 \triangleq \frac{\lambda_{min}(M_1)}{A_1}$, therefore, if, $0 \leq \gamma_1 \leq 3-\sqrt{8}\approx0.17$, then $.478\leq \Gamma \leq 1$.
Also, recall from the definition of $\mathcal{S}$ in (\ref{j_def}), the upper bound of $\|\mathcal{S}\|$ can be obtained as,
\begin{equation}
 \|\mathcal{S}\| \leq \Big(\sqrt{1+\|\varphi\|^2}\Big)\|\tilde{W}\|
 \label{Y-ineq}
\end{equation}
Therefore, from lower and upper bounds of $\mathcal{S}$ in (\ref{eq:S}) and (\ref{Y-ineq}), respectively, the bound over $\|\tilde{W}\|$ becomes,  
\begin{equation}
\|\tilde{W}\| > \frac{\frac{b_N}{\lambda_{min}(M_1)}\Gamma}{\sqrt{1+\|\varphi\|^2}} 
\label{ld1}
\end{equation}

It could be seen that $\tilde{W}$, $\tilde{W}_1$ and $e$ are UUB stable with corresponding bounds given in the RHS of (\ref{ld1}), (\ref{eq:w1uub}) and (\ref{declyap}), respectively. 
Also, note that if Eq. (\ref{ld1}) along with either of the Eqs. (\ref{eq:w1uub}) or (\ref{declyap}) holds, then the negative definiteness of $\dot{L}$ is ensured.

\textbf{Case (ii)}: $\Xi(z,\hat{u})=1$ \\
This case implies that the Lyapunov function is non-decreasing along the augmented system trajectories, 
that is $\Sigma\geq 0$.

From (\ref{eq:St}), the bound over $S$ is, 
\begin{equation}
\begin{split}
\|S\| \leq  S_M=\|\tilde{W}\|\frac{1}{2}b^2_{\vartheta z}W_Mg_M^2\|R^{-1}\|+\|\tilde{W}\|^2\frac{1}{2}b^2_{\vartheta z}g_M^2\|R^{-1}\|+\|\tilde{W}\|b_{\vartheta z}(L_f\|z\|+g_Mu_m)
\end{split}
\end{equation}
Utilizing Assumption \ref{as:e},
\begin{equation}
\small
\begin{split}
\|S\| \leq  S_M=\|\tilde{W}\|\frac{1}{2}b^2_{\vartheta z}W_Mg_M^2\|R^{-1}\|+\|\tilde{W}\|^2\frac{1}{2}b^2_{\vartheta z}g_M^2\|R^{-1}\|+\|\tilde{W}\|b_{\vartheta z}(L_{f1}\|e\|+L_{f2}+g_Mu_m)
\end{split}
\label{eq:sbound}
\end{equation}
Using (\ref{eq:sbound}) and the first inequality of (\ref{eq:ldot11}) in (\ref{ldot}),
\begin{equation}
\begin{split}
\dot{L}&\leq-\lambda_{min}(Q)\|e\|^2+a_1\|e\|+a_2-\lambda_{min}(P)\|\tilde{W}_1\|^2+\nu_1\|\tilde{W}_1\|\\
&-\lambda_{min}(M_1)A_1\|\mathcal{S}\|^{3}+b_NA_1\|\mathcal{S}\|^{2}-\lambda_{min}(M_1)b_N\|\mathcal{S}\|^2+b_N^2\|\mathcal{S}\|\\
&+\|\tilde{W}\|\frac{1}{2}b^2_{\vartheta z}W_Mg_M^2\|R^{-1}\|+\|\tilde{W}\|^2\frac{1}{2}b^2_{\vartheta z}g_M^2\|R^{-1}\|+\|\tilde{W}\|b_{\vartheta z}(L_{f1}\|e\|+L_{f2}+g_Mu_m)
\end{split}
\label{eq:first_ldot}
\end{equation}

Now, two numbers $l_1$ and $l_2$ are considered such that $l_1+l_2=1$. Then,
\begin{equation}
\small
\begin{split}
\dot{L}&\leq -l_1\lambda_{min}(Q)\|e\|^2+a_1\|e\|+a_2-\lambda_{min}(P)\|\tilde{W}_1\|^2+\nu_1\|\tilde{W}_1\|-\lambda_{min}(M_1)A_1\|\mathcal{S}\|^{3}\\
&+b_NA_1\|\mathcal{S}\|^{2}-\lambda_{min}(M_1)b_N\|\mathcal{S}\|^2+b_N^2\|\mathcal{S}\|+\|\tilde{W}\|\frac{1}{2}b^2_{\vartheta z}W_Mg_M^2\|R^{-1}\|+\|\tilde{W}\|^2\frac{1}{2}b^2_{\vartheta z}g_M^2\|R^{-1}\|\\
&-l_2\lambda_{min}(Q)\left(\|e\|-\frac{\|\tilde{W}\|b_{\vartheta z}L_{f1}}{2l_2\lambda_{min}(Q)}\right)^2+\frac{(\|\tilde{W}\|b_{\vartheta z}L_{f1})^2}{4l_2\lambda_{min}(Q)}+\|\tilde{W}\|b_{\vartheta z}(L_{f2}+g_Mu_m)
\end{split}
\label{ld11}
\end{equation}

From (\ref{j_def}), $\|\tilde{W}\| \leq \|\mathcal{S}\|$. Therefore, (\ref{ld11}) can be further simplified into,
\begin{equation}
\begin{split}
\dot{L}&\leq -l_1\lambda_{min}(Q)\|e\|^2+a_1\|e\|+a_2-\lambda_{min}(P)\|\tilde{W}_1\|^2+\nu_1\|\tilde{W}_1\|\\
&-\lambda_{min}(M_1)A_1\|\mathcal{S}\|^{3}+b_NA_1\|\mathcal{S}\|^{2}-\lambda_{min}(M_1)b_N\|\mathcal{S}\|^2+b_N^2\|\mathcal{S}\|\\
&+\|\mathcal{S}\|\frac{1}{2}b^2_{\vartheta z}W_Mg_M^2\|R^{-1}\|+\|\mathcal{S}\|^2\frac{1}{2}b^2_{\vartheta z}g_M^2\|R^{-1}\|+\frac{\|\mathcal{S}\|^2b^2_{\vartheta z}L^2_{f1}}{4l_2\lambda_{min}(Q)}+\|\mathcal{S}\|b_{\vartheta z}(L_{f2}+g_Mu_m)
\end{split}
\end{equation}

Upon further simplification, the above inequality can be rendered as,
\begin{equation}
\small
\begin{split}
\dot{L}&\leq-l_1\lambda_{min}(Q)\left(\|e\|-\frac{a_1}{2l_1\lambda_{min}(Q)}\right)^2+\underbrace{\frac{a_1^2}{4l_1\lambda_{min}(Q)}+a_2+\frac{\nu_1^2}{4\lambda_{min}(P)}}_\text{$\triangleq \aleph_1$}-\lambda_{min}(P)\left(\|\tilde{W}_1\|-\frac{\nu_1}{2\lambda_{min}(P)}\right)^2\\
&+\|\mathcal{S}\|\left(-\lambda_{min}(M_1)A_1\|\mathcal{S}\|^2+\underbrace{\left(b_NA_1-\lambda_{min}(M_1)b_N+\frac{1}{2}b^2_{\vartheta z}g^2_M\|R^{-1}\|+\frac{b^2_{\vartheta z}L^2_{f1}}{4l_2\lambda_{min}(Q)}\right)}_\text{$\triangleq b_1$}\|\mathcal{S}\|\right)\\
&+\underbrace{\left(b_N^2+\frac{1}{2}b^2_{\vartheta z}W_Mg_M^2\|R^{-1}\|+b_{\vartheta z}(L_{f2}+g_Mu_m)\right)}_\text{$\triangleq b_2$}\|\mathcal{S}\|
\end{split}
\end{equation}
Now, in order for $\dot{L}$ to be negative definite,
\begin{equation}
\begin{split}
-l_1&\lambda_{min}(Q)\left(\|e\|-\frac{a_1}{2l_1\lambda_{min}(Q)}\right)^2+\aleph_1<0\\
\Rightarrow&\|e\|>\frac{a_1}{2l_1\lambda_{min}(Q)}+\sqrt{\frac{\aleph_1}{l_1\lambda_{min}(Q)}}
\end{split}
\label{L2c2}
\end{equation}
or
\begin{equation}
\begin{split}
-\lambda_{min}&(P)\left(\|\tilde{W}_1\|-\frac{\nu_1}{2\lambda_{min}(P)}\right)^2+\aleph_1<0\\
\Rightarrow&\|\tilde{W}_1\|>\frac{\nu_1}{2\lambda_{min}(P)}+\sqrt{\frac{\aleph_1}{\lambda_{min}(P)}}
\end{split}
\label{eq:w1finaluub1}
\end{equation}
and
\begin{equation}
\begin{split}
-\lambda_{min}&(M_1)A_1\|\mathcal{S}\|^2+b_1\|\mathcal{S}\|+b_2<0\\
\Rightarrow&\|\mathcal{S}\|>\frac{b_1}{2A_1\lambda_{min}(M_1)}+\sqrt{\frac{b_1^2}{4A_1^2\lambda^2_{min}(M_1)}+\frac{b_2}{A_1\lambda_{min}(M_1)}}=A^{'}\\
\Rightarrow&\|\mathcal{S}\|>\frac{b_N}{\lambda_{min}(M_1)}\underbrace{\left[\frac{1}{2}\left(1-\gamma_1+\alpha_2\right)+\sqrt{\left(\frac{1}{2}(1-\gamma_1+\alpha_2)\right)^2+\gamma_1(1+\alpha_1)}\right]}_\text{$\triangleq \Gamma^{'}$}=A^{'}
\end{split}
\label{eq:gamma_dash}
\end{equation}
where, $\gamma_1=\frac{\lambda_{min}(M_1)}{A_1}$ is as defined after (\ref{eq:S}), and 
\begin{equation}
\begin{split}
 \alpha_1\triangleq\frac{\frac{1}{2}b^2_{\vartheta z}W_Mg_M^2\|R^{-1}\|+b_{\vartheta z}(L_{f2}+g_Mu_m)}{A_1b_N};~\alpha_2\triangleq\frac{\frac{1}{2}b^2_{\vartheta z}g^2_M\|R^{-1}\|+\frac{b^2_{\vartheta z}L^2_{f1}}{4l_2\lambda_{min}(Q)}}{b_N^2}
\end{split}
\label{eq:al12}
\end{equation}
Now, using the upper bound of $\mathcal{S}$ from (\ref{j_def}), the UUB bound over $\tilde{W}$ can be derived from (\ref{eq:gamma_dash}) as,
\begin{equation}
\begin{split}
\|\tilde{W}\| > \frac{\frac{b_N}{\lambda_{min}(M_1)}}{\sqrt{1+\|\varphi\|^2}}\Gamma^{'}
\end{split}
\label{wcineq}
\end{equation}

In order for $\dot{L}$ to be negative definite, 
for \textbf{Case (i)}, (\ref{ld1}) along with either (\ref{declyap}) or (\ref{eq:w1uub}) needs to hold true, similarly for \textbf{Case (ii)}, (\ref{wcineq}) alongwith either (\ref{L2c2}) or (\ref{eq:w1finaluub1}) needs to hold true.
Hence in \textbf{Case (i)}, $\tilde{W}$, $\tilde{W}_1$ and $e$ are UUB stable with corresponding bounds given in the RHS of (\ref{ld1}), (\ref{eq:w1uub}) and (\ref{declyap}), respectively.
Similarly, in \textbf{Case (ii)}, (\ref{L2c2}), (\ref{wcineq}) and (\ref{eq:w1finaluub1}) define the UUB sets for $e,\tilde{W}$ and $\tilde{W}_1$, respectively.
This completes the stability proof of the identifier and critic update mechanisms given in (\ref{update}) and (\ref{tuning_law1}), respectively.
\end{proof}

\subsection{Discussion on the update law}
\begin{remark}
Note that, the critic update law (\ref{tuning_law1}) does not require the information of approximated drift dynamics ($\hat{f}(x)$).
Since only approximated control coupling dynamics ($\hat{g}(x)$) is needed in (\ref{tuning_law1}) and policy improvement i.e., (\ref{approx_u}), the presented controller is prone to lesser approximation errors coming from identifier.
\end{remark}

\begin{remark}
\textnormal{ 
The RHS of inequalities (\ref{declyap}) and  (\ref{eq:w1uub}) are the UUB bounds for $e\text{ and }\tilde{W}_1$, respectively in \textbf{Case (i)}, and UUB (\ref{L2c2}) and (\ref{eq:w1finaluub1})
 for \textbf{Case (ii)}. 
Note that $\lambda_{min}(Q)$ and $\lambda_{min}(P)$ appear in the denominator of both $\aleph$ and $\aleph_1$. 
By suitable choice of $Q$ matrix, its minimum eigenvalue can be selected to be large. Moreover, due to experience replay in Identifier update law, the minimum eigenvalue of $P$ matrix progressively increases as new observations arrive as explained in Remark \ref{re1}.
This leads to tight UUB bounds sufficiently close to origin for $e$ and $\tilde{W}_1$.}
\end{remark}

\begin{remark}
On the other hand, the UUB bound for $\tilde{W}$ for \textbf{Case (i)} is given by (\ref{ld1}). It is evident that when the Lyapunov function is decreasing along the augmented state trajectory, variable learning rate has a direct influence over UUB bound for error in critic NN weights ($\tilde{W}$). By suitable selection of $K_1,K_2$ and $T$, the scaling factor $\Gamma$ in (\ref{eq:S}) can be varied between $.478$ and $1$ and accordingly the UUB bound of $\tilde{W}$ gets scaled down compared to that $\left(=\frac{b_N}{\lambda_{min}(M_1)}\right)$ with constant learning rate (also derived in Eq. (45) of  \cite{liu2015reinforcement}).
\end{remark}

Next, note that for \textbf{Case (ii)}  in the proof of Theorem \ref{thm:crit}, following the same form of update law but with constant learning rate in (\ref{tuning_law1}), that is with $k_2=0$ and $l=1$, the UUB bound of $\mathcal{S}$ for negative definiteness of the same Lyapunov function (as in (\ref{eq:final_L})) can be obtained as,
\begin{equation}
\begin{split}
\|\mathcal{S}\| > \frac{b_N}{\lambda_{min}(M_1)}\underbrace{\left(\frac{1+\alpha_1^{'}}{1-\alpha_2^{'}}\right)}_{\triangleq\text{$\Gamma_2^{'}$}} \Rightarrow \tilde{W} > \frac{\frac{b_N}{\lambda_{min}(M_1)}}{\sqrt{1+\|\varphi\|^2}}\Gamma_2^{'}
\end{split}
\label{const_gain1}
\end{equation}
where,
\begin{equation}
\begin{split}
\alpha_1^{'}\triangleq A_1\alpha_1;\quad\alpha_2^{'}\triangleq \frac{b_N^2\alpha_2}{\lambda_{min}(M_1)}
\end{split}
\label{eq:aldash}
\end{equation}
where, $\alpha_1,\alpha_2$ are as defined in (\ref{eq:al12}) and $\gamma_1$ is as defined after (\ref{eq:S}). 
Comparing (\ref{eq:gamma_dash}) and (\ref{const_gain1}), clearly, the UUB bound for variable gain is tighter (smaller) than that for constant gain if 
$\frac{\Gamma^{'}}{\Gamma_2^{'}}<1$, which in turn implies that,
\begin{equation}
\begin{split}
\gamma_1=\frac{\lambda_{min}(M_1)}{A_1}<\underbrace{\frac{(1+A_1\alpha_1)\left[(1+A_1\alpha_1)-(1+\alpha_2)\left(1-\frac{b_N^2\alpha_2}{\lambda_{min}(M_1)}\right)\right]}{\alpha_1\left(1-\frac{b_N^2\alpha_2}{\lambda_{min}(M_1)}\right)(1-A_1)}}_\text{$\triangleq L_1$}
\end{split}
\label{g11}
\end{equation}
Since $\gamma_1$ is a positive quantity by definition, in order to have a feasible range of $\gamma_1$, the RHS of (\ref{g11}), that is $L_1$, should be positive as well. This happens when,

\begin{equation}
\begin{split}
\alpha_2b_N^2\leq\lambda_{min}(M_1)\leq(\alpha_2+1)b_N^2
\end{split}
\label{a2}
\end{equation}


This leads to Remark \ref{last_remark}.

\begin{remark}\label{last_remark}
For \textbf{Case (ii)} in the proof of Theorem \ref{thm:crit}, that is when the Lyapunov function is increasing along augmented state trajectory, the UUB bound for $\tilde{W}$ with variable gain (\ref{wcineq}) is tighter (smaller) than that with constant learning (\ref{const_gain1}) if $K_1$ and $K_2$ in $M_1$ is such selected that (\ref{a2}) is satisfied. 
\end{remark}

\addtolength{\textheight}{-3cm}   
\section{Simulation Results}\label{res}
In this section, the identifier-critic controller structure presented above will be validated on a nonlinear spring damper system from \cite{modares2014optimal} for set point tracking. 
\begin{equation}
\begin{split}
\dot{x}_1=x_2;~~\dot{x}_2=-\frac{k}{m}x_1^3-\frac{c}{m}x_2+\frac{1}{m}u
\end{split}
\label{sys_dyn}
\end{equation}
The control saturation is considered to be $|u| \leq u_m = 2$. Physical parameters used for simulation are $m=1kg$, $k=3 N/m$, $c=.5 N-s/m$. 
The control coupling dynamics is $g=(g_1,g_2)=(0,1/m)^T$ (Note that $g_1$ here represent the first component of control coupling dynamics and not the variable gain from critic update law).
It is desired to keep the position to $x_{1d}=1m$ even in face of varying physical parameter.
Now desired velocity command is given by, $x_{2d}=\dot{x}_{1d}-5(x_1-x_{1d})$.
The control system is required to optimally drive $x_2$ to $x_{2d}$. The augmented state is, $z=(z_1,z_2)^T=(x_2-x_{2d},x_{2d})^T$ and the regressor vector for critic NN is, $\vartheta=(z_1,z_2,z_1^2,z_2^2,z_1z_2,z_1^3,z_2^3)^T$. 
The baseline learning rate was, $\alpha=20$ and exponential term in $g_1$ was $k_2=1$.

Two cases are studied, 
\begin{enumerate}
    \item Validation of without ER-based identifier and with constant learning rate gradient descent based update law for critic NN for optimal tracking of (\ref{sys_dyn}). 
    These results are depicted in Fig. \ref{fig:e_states}
    \item Validation of ER-based identifier and variable gain gradient descent-based update law for critic NN for optimal tracking of (\ref{sys_dyn}). 
    These results are shown in Fig. \ref{var_gain}.
\end{enumerate}

In order to assess the performance of the presented scheme against varying physical parameter, mass ($m$) and spring constant $k$ were varied i.e., $(m,k)=(1kg,3N/m)$ for $t<14$, then $(m,k)=(4.5kg,5N/m)$ for $14 \leq t < 30$ and then finally, $(m,k)=(8kg,9N/m)$ thereafter.
It is to be emphasised that the critic update law in IRL framework requires the information of control coupling dynamics only and not the drift dynamics. 
The approximated control coupling dynamics are $\hat{g}=(\hat{g}_1,\hat{g}_2)^T$ and augmented control coupling dynamics (refer to Eq. (\ref{eq:aug_system})) for this problem is, $\hat{G}=(\hat{g}_2,0)^T$, which is used in the generation of control effort.

From Figs. \ref{fig:identi_wout} and \ref{fig:identi} it is evident that the ER-based identifier scheme is able to make the identifier NN weights converge faster and bring the approximated NN weights closer to their ideal values than the scheme when ER is not used. 
This also leads to $\hat{g}$ settling faster and closer to its true value even in the face of parametric variations when ER-technique was utilized. This can be seen from Figs. \ref{fig:g_til_wout} and \ref{fig:g_til}, in which the difference between actual and approximated control coupling dynamics have been portrayed for respective cases. Note that the two distinct spikes that can be observed at $t=14$ and $t=30$ sec in these figures correspond to the change of mass and spring constant at those two time-instants.
 
Critic NN weights also converge faster to a tighter residual set when variable gain gradient descent is leveraged (refer to Fig \ref{fig:crit})  than when constant learning rate-based update law (i.e., $k_2=0$) is used (refer to Fig. \ref{fig:crit_wout}).
The state converge to a tighter set around the set point $(x_{1d}=1m)$ when ER-based identifier and variable gain gradient descent was utilized as can be observed from Figs. \ref{fig:states_wout} and \ref{fig:states_with}.
In both the cases, i.e., without ER-identifier and constant learning rate critic NN and with ER-identifier and variable gain gradient descent, the control policy is constrained within $\pm2 N$ (see Figs. \ref{fig:ctr} and \ref{fig:ctr_with}).

\begin{figure*}
\centering
\subcaptionbox{Critic NN weights\label{fig:crit_wout}}{\includegraphics[width=.47\textwidth,height=9.5cm,keepaspectratio,trim={.2cm 0.0cm 3.5cm .08cm},clip]{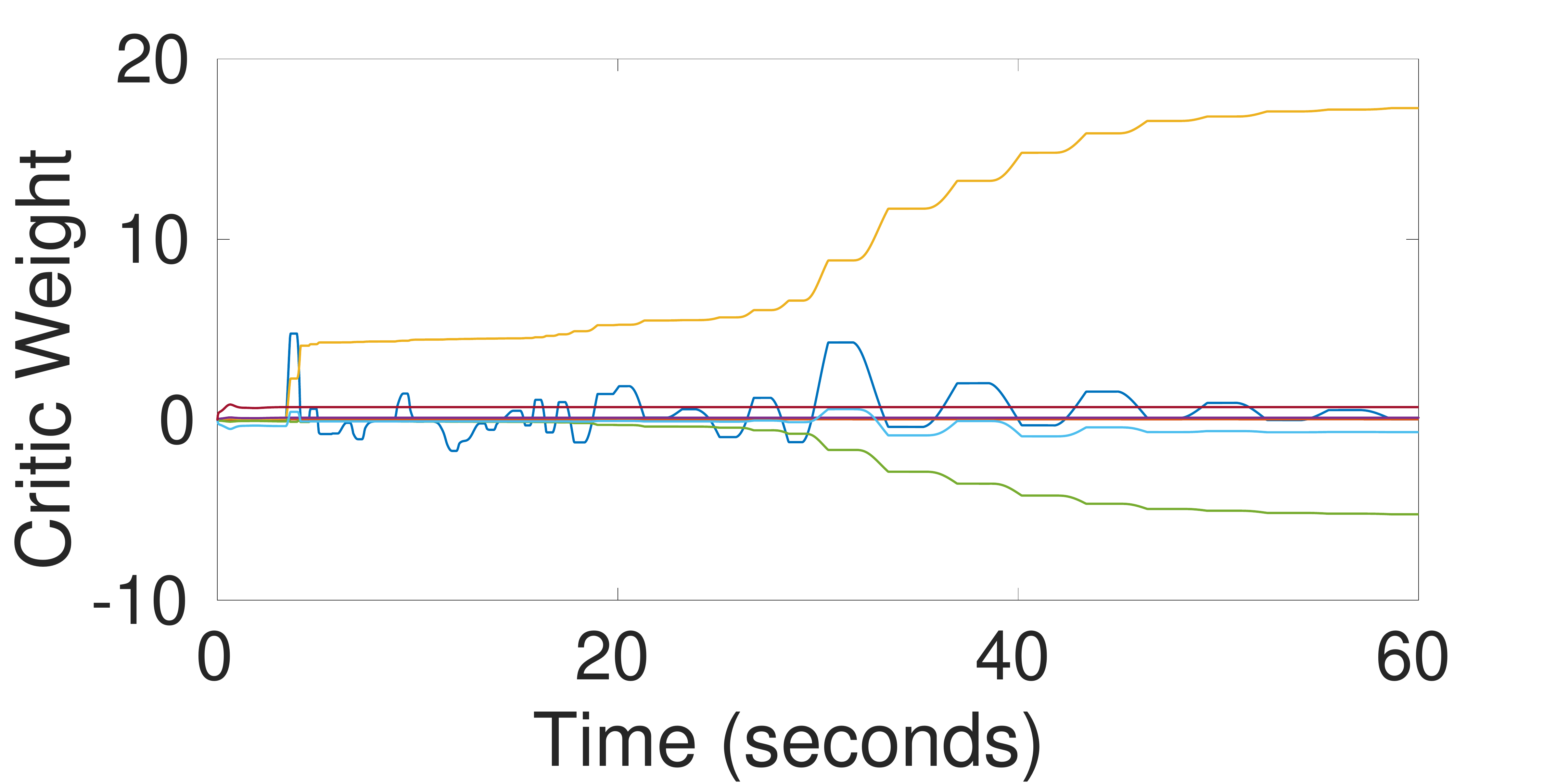}}%
\hspace{0cm} 
\subcaptionbox{Identifier NN weights\label{fig:identi_wout}}{\includegraphics[width=.47\textwidth,height=9.5cm,keepaspectratio,trim={1.8cm 0.0cm 2cm .08cm},clip]{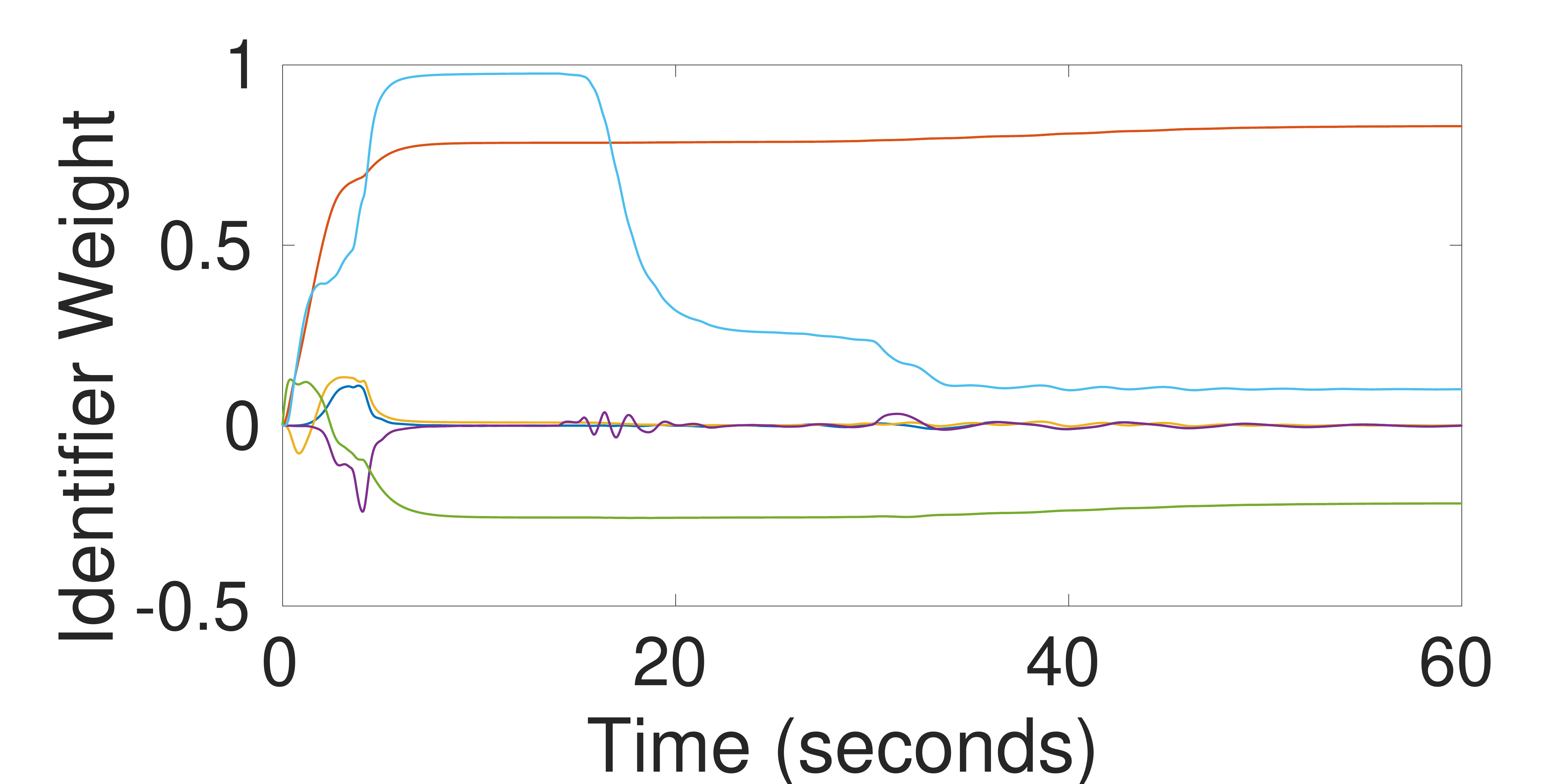}}
\hspace{0cm} 
\subcaptionbox{Difference between actual and identified control coupling dynamics \label{fig:g_til_wout}}{\includegraphics[width=.47\textwidth,height=9.5cm,keepaspectratio,trim={.01cm 0.0cm 4cm .08cm},clip]{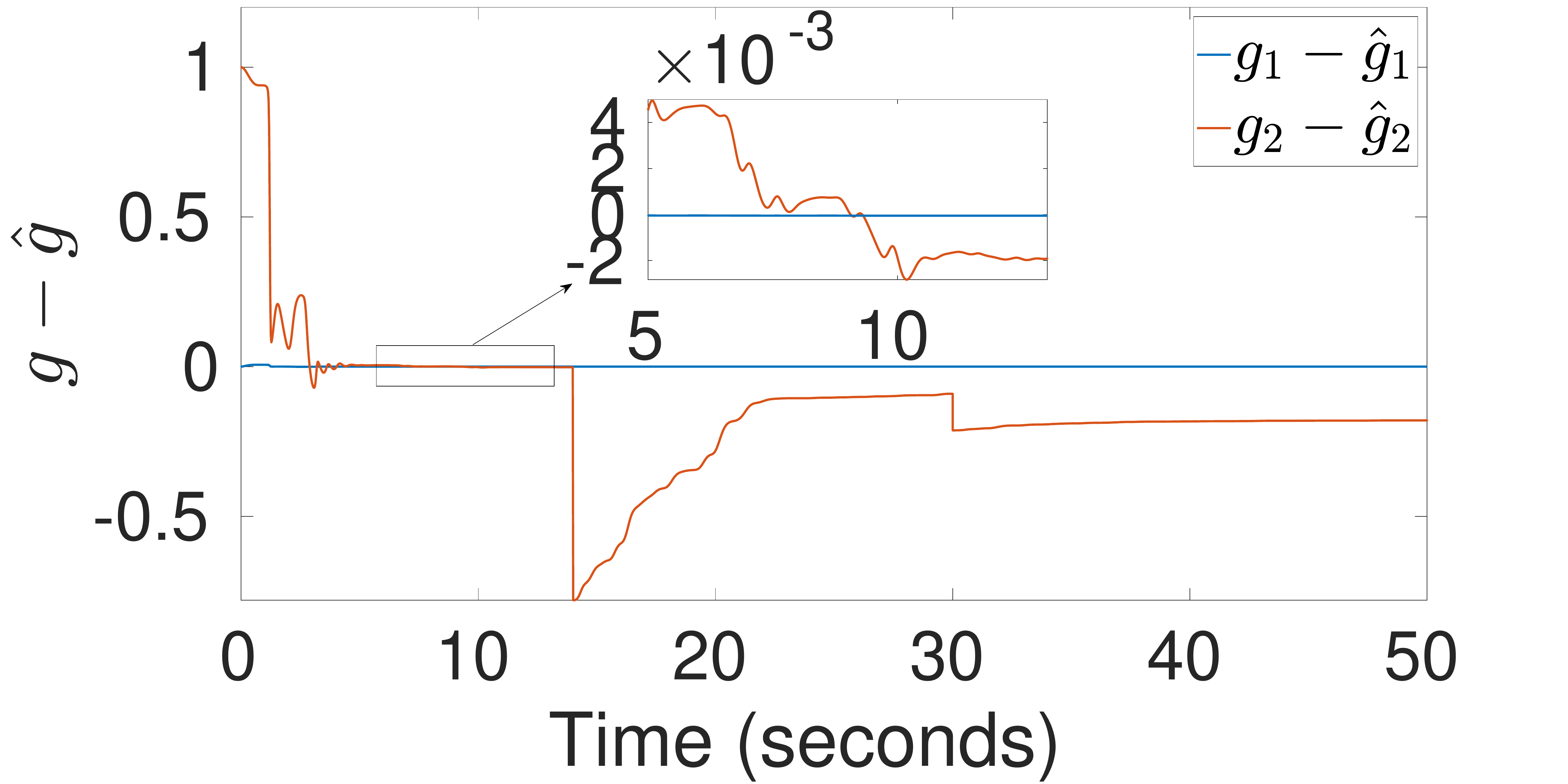}}%
\hspace{0cm} 
\subcaptionbox{States\label{fig:states_wout}}{\includegraphics[width=.47\textwidth,height=9.5cm,keepaspectratio,trim={.2cm 0.0cm 2cm .08cm},clip]{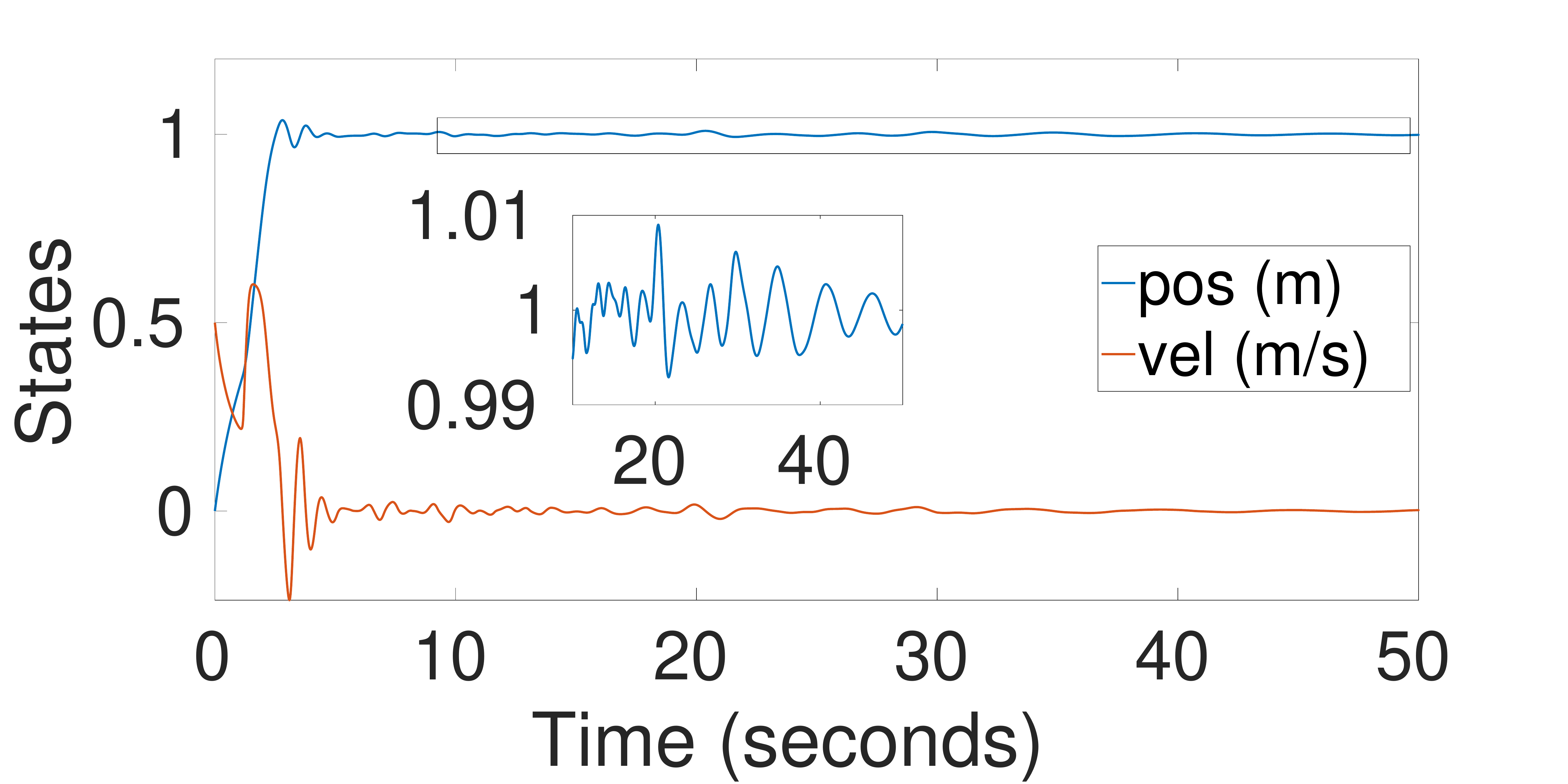}}%
\hspace{0cm} 
\subcaptionbox{Control Profile\label{fig:ctr}}{\includegraphics[width=.47\textwidth,height=9.5cm,keepaspectratio,trim={.8cm 0.0cm 2cm .08cm},clip]{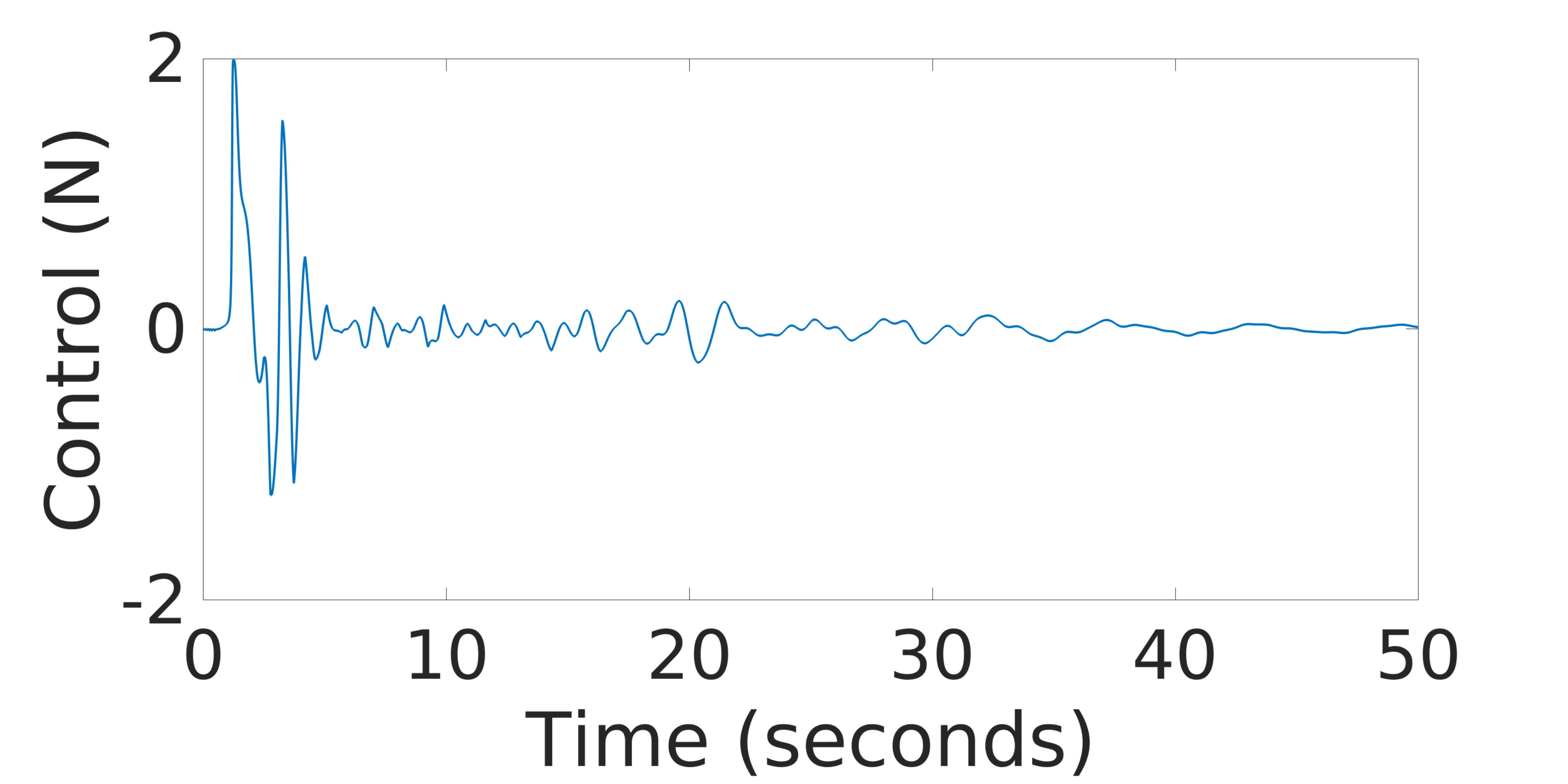}}%
\vspace{-.1cm}
\caption{Performance of the identifier-critic scheme when ER-based identifier and variable gain GD-based critic is not used}
\label{fig:e_states}
\end{figure*}
\vspace{-.2cm}

\begin{figure*}
\centering
\subcaptionbox{Critic NN weights\label{fig:crit}}{\includegraphics[width=.47\textwidth,height=9.5cm,keepaspectratio,trim={.2cm 0.0cm 2cm .08cm},clip]{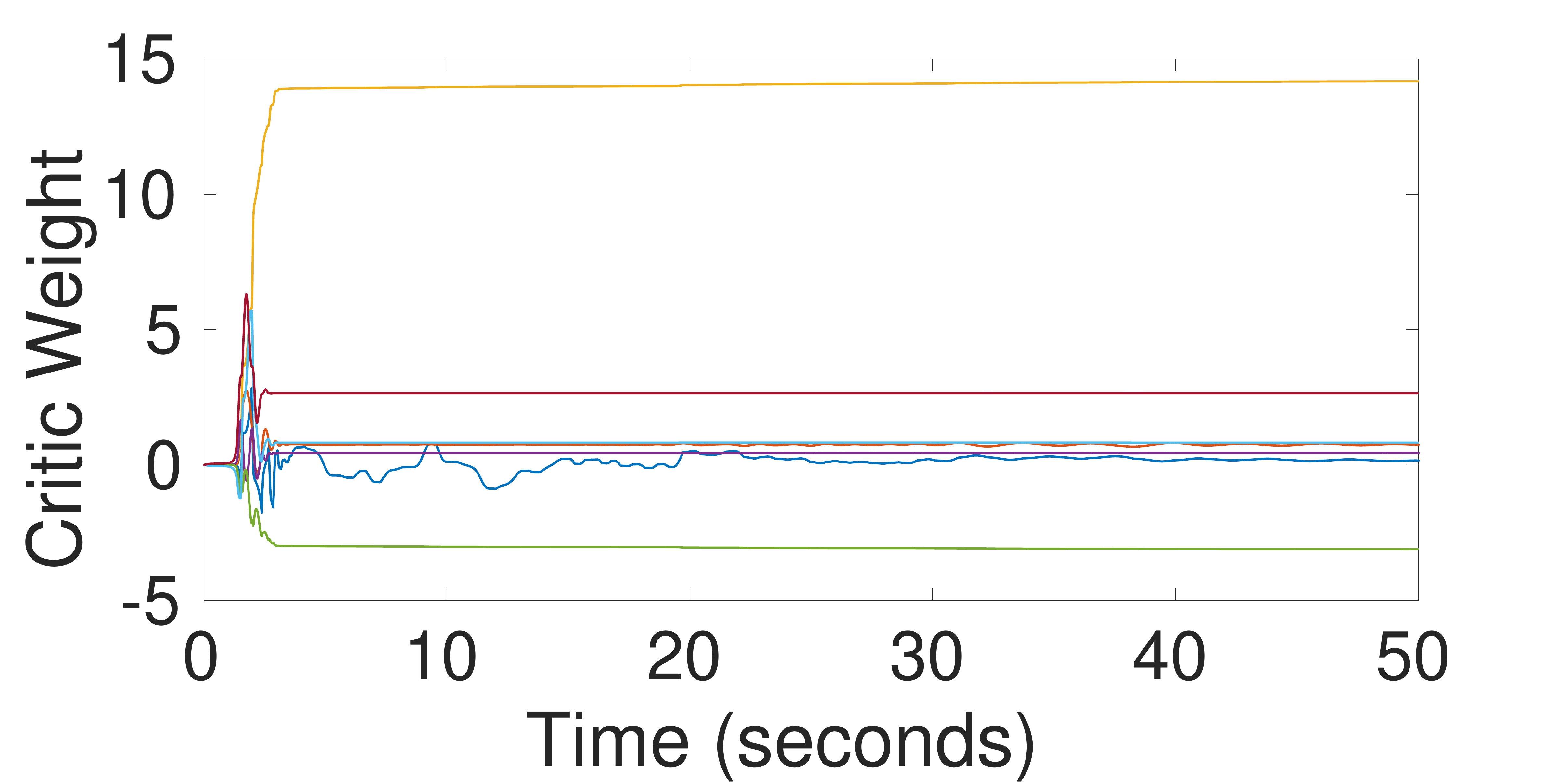}}%
\hspace{0cm}
\subcaptionbox{Identifier NN weights\label{fig:identi}}{\includegraphics[width=.47\textwidth,height=9.5cm,keepaspectratio,trim={1.8cm 0.0cm 2cm .08cm},clip]{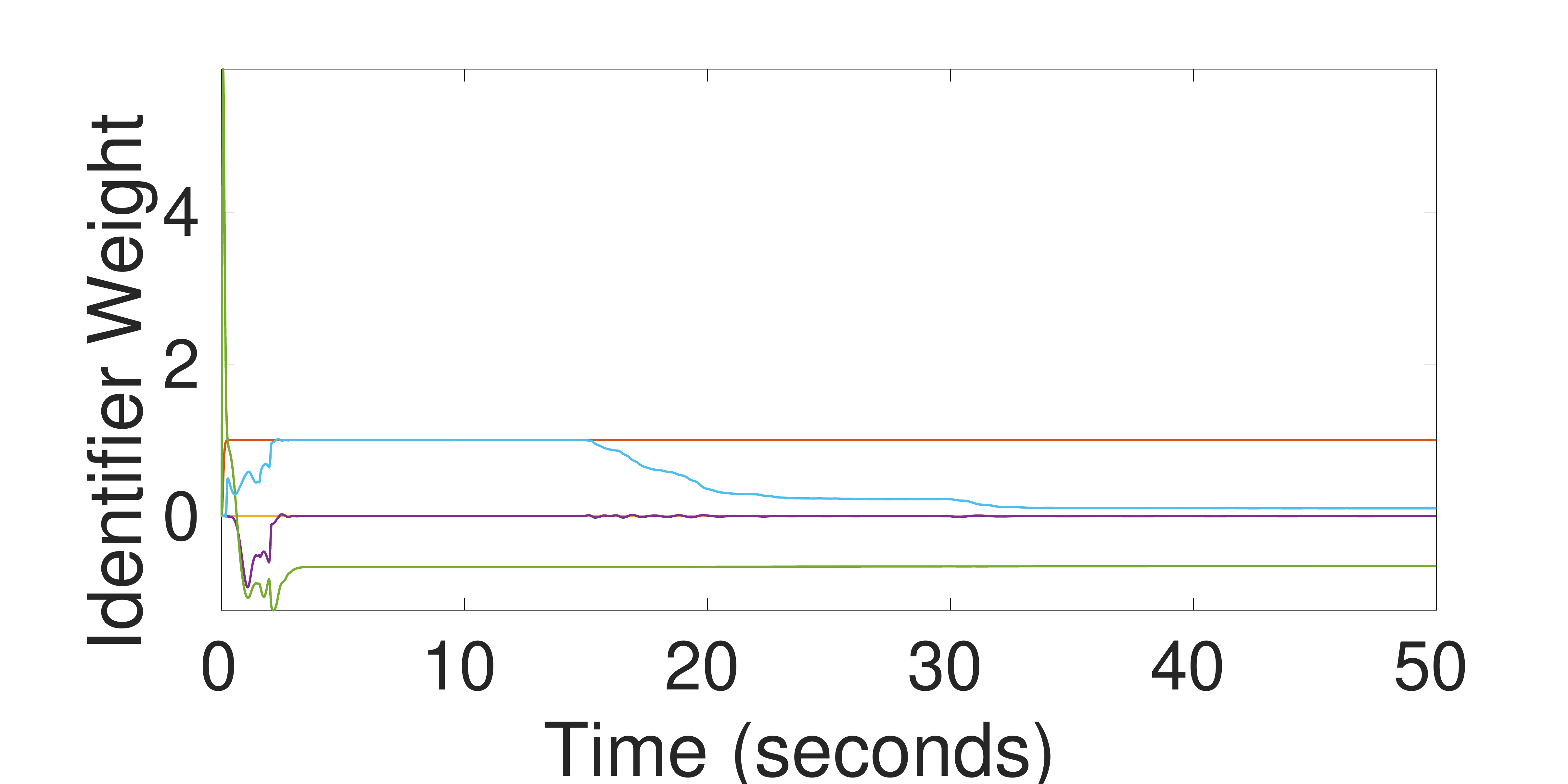}}
\hspace{0cm} 
\subcaptionbox{Difference between actual and identified control coupling dynamics\label{fig:g_til}}{\includegraphics[width=.47\textwidth,height=9.5cm,keepaspectratio,trim={0cm 0.0cm 2cm .08cm},clip]{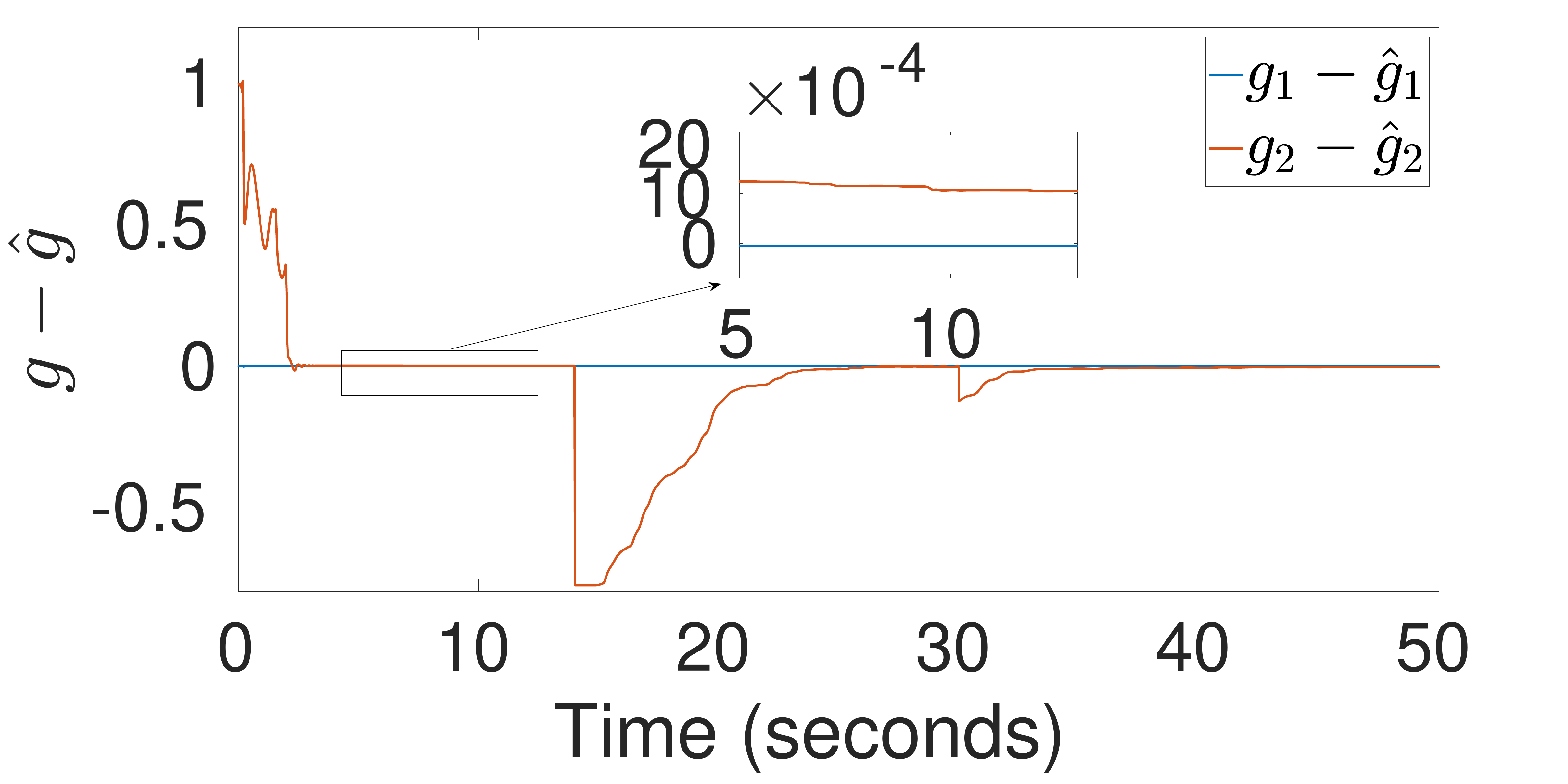}}%
\hspace{0cm} 
\subcaptionbox{States\label{fig:states_with}}{\includegraphics[width=.47\textwidth,height=9.5cm,keepaspectratio,trim={.2cm 0.0cm 2cm .08cm},clip]{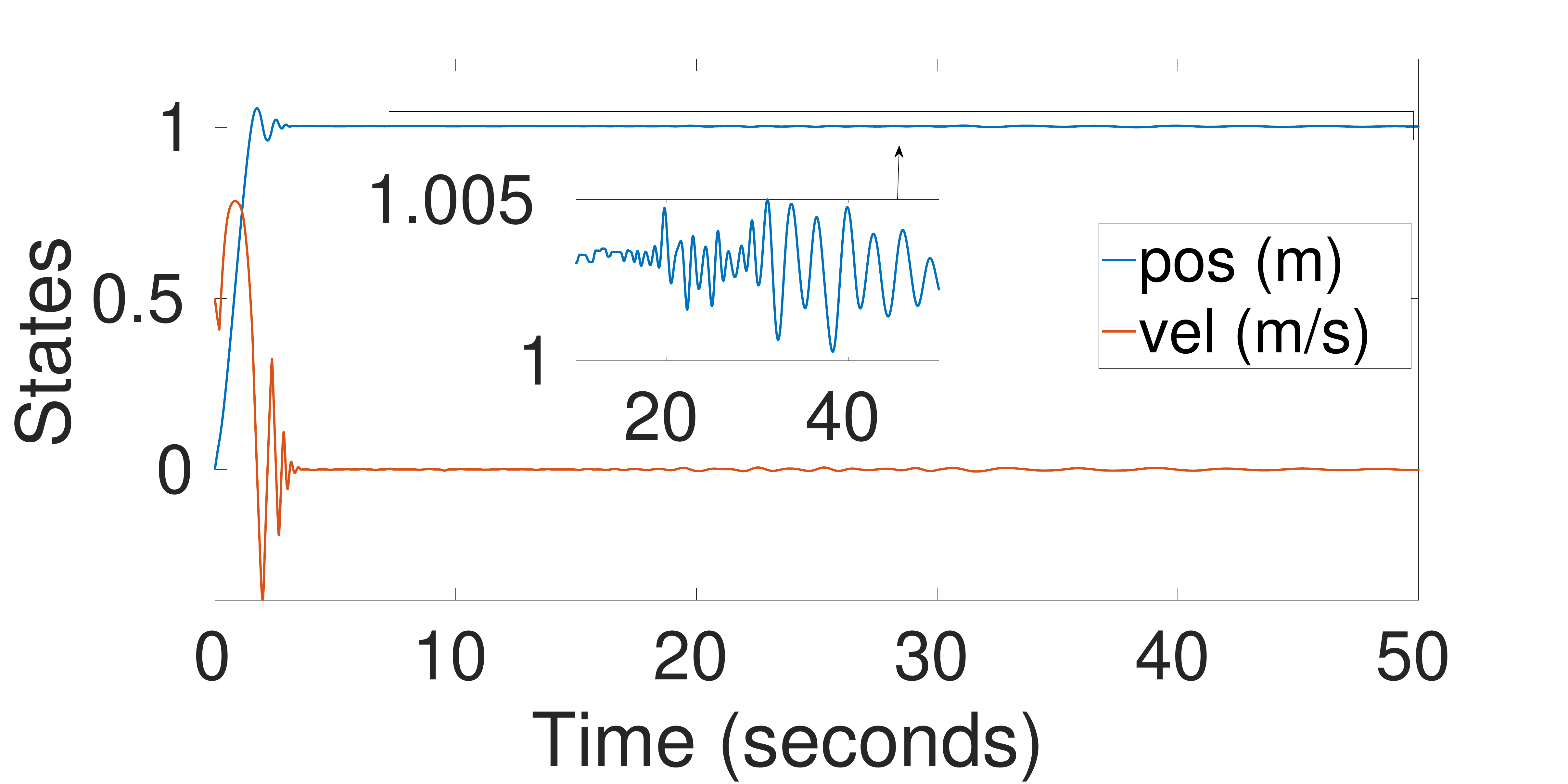}}%
\hspace{0cm} 
\subcaptionbox{Control Profile\label{fig:ctr_with}}{\includegraphics[width=.47\textwidth,height=9.5cm,keepaspectratio,trim={.8cm 0.0cm 2cm .08cm},clip]{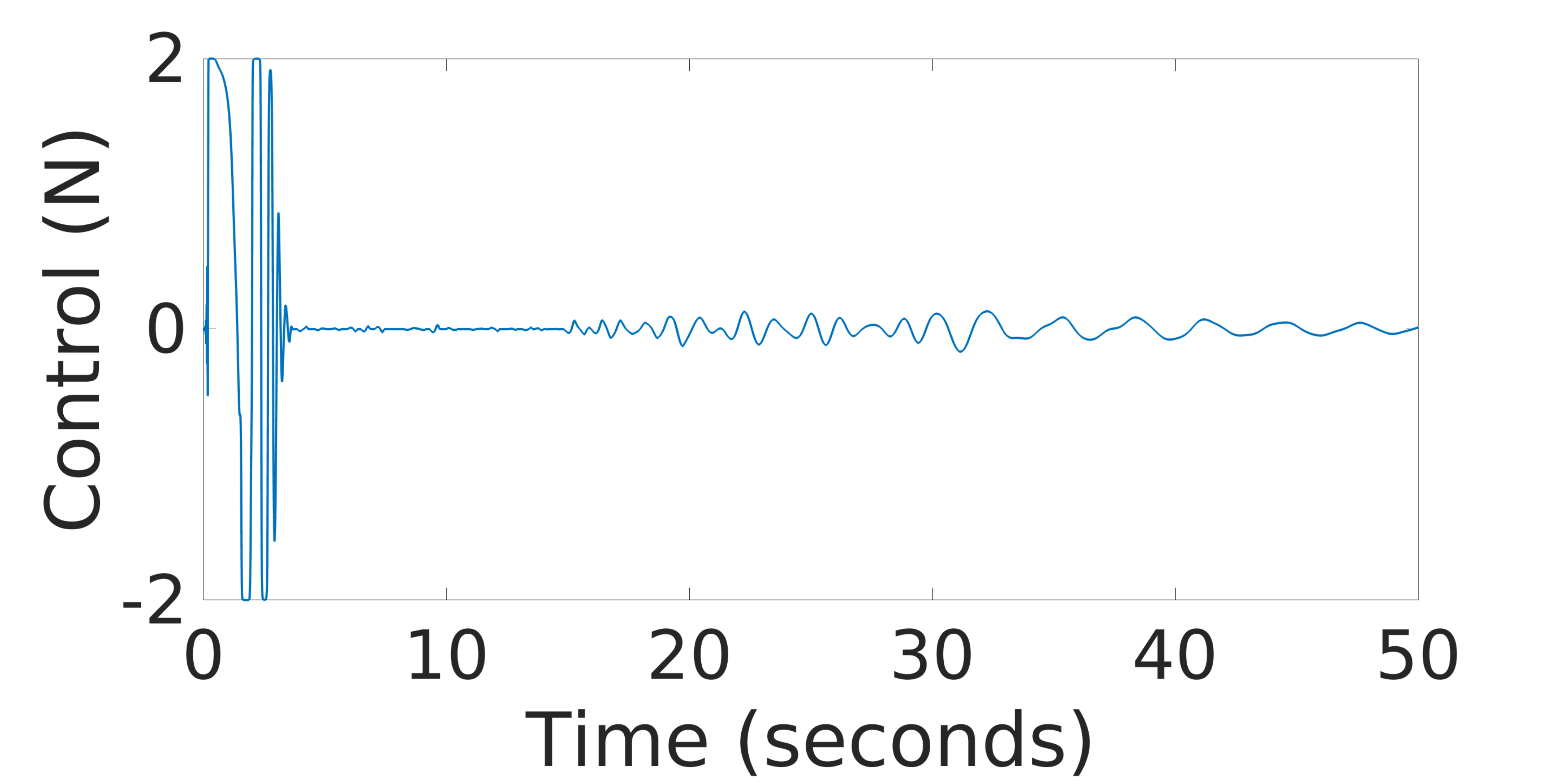}}%
\vspace{-.1cm}
\caption{Performance of the identifier-critic scheme when ER-based identifier and variable gain GD-based critic is used}
\label{var_gain}
\end{figure*}

\section{CONCLUSIONS}\label{conclusion}
Simultaneous identification and integral reinforcement learning (IRL)-based optimal tracking control of a completely unknown continuous time nonlinear system with actuator constraints is presented in this paper.
An improved system identification via experience replay (ER) technique is presented, which also helps in reducing the size of the residual set for state error and error in identifier neural network (NN) weights. The variable gain gradient descent in the presented update law could adjust the learning rate depending on the instantaneous Hamilton-Jacobi-Bellman (HJB) error. 
This results in accelerated learning when the HJB error is large and dampened learning speed when the HJI error becomes smaller.
It has an added benefit of shrinking the size of the residual set for error in critic NN weights compared to the case with constant learning rate. The presence of stabilizing term in the critic NN update law also helps in obviating the requirement of an initial stabilizing controller. The presented update law for identifier and critc NN is shown to ensure the uniform ultimate boundedness (UUB) stability of the state error and error in NN weights. The proposed framework is validated on a continuous time nonlinear system with varying physical parameters. This could also be used in general for any control-affine system application without prior knowledge of system dynamics.

\end{document}